\def\BibTeX{{\rm B\kern-.05em{\sc i\kern-.025em b}\kern-.08em
    T\kern-.1667em\lower.7ex\hbox{E}\kern-.125emX}}
\theoremstyle{plain}
\newtheorem{theorem}{Theorem}[section]
\newtheorem{corollary}[theorem]{Corollary}
\theoremstyle{definition}
\newtheorem{definition}[theorem]{Definition}
\theoremstyle{remark}
\DeclareMathOperator*{\argmax}{arg\,max}
\DeclareMathOperator{\argmaxinline}{arg\,max}
\DeclareMathOperator{\argmininline}{arg\,min}
\newcommand{\matr}[1]{\mathbf{#1}}
\newcommand{\vect}[1]{\mathbf{#1}}
\newcommand{\vectsym}[1]{\bm{#1}}
\newcommand{\set}[1]{\mathcal{#1}}
\DeclarePairedDelimiter\abs{\lvert}{\rvert}%
\DeclarePairedDelimiter\norm{\lVert}{\rVert}%
    \parbox{\textwidth}{%
        {\footnotesize \copyright 20xx IEEE. Personal use of this material is permitted. Permission from IEEE must be obtained for all other uses, in any current or future media, including reprinting/republishing this material for advertising or promotional purposes, creating new collective works, for resale or redistribution to servers or lists, or reuse of any copyrighted component of this work in other works.}
        }
\algnewcommand{\IIf}[1]{\State\algorithmicif\ #1\ \algorithmicthen}
\algnewcommand{\EndIIf}{\unskip\ \algorithmicend\ \algorithmicif}
\begin{document}

\newcommand{\gao}[1]{\todo{NG: #1}}
\newcommand{\tom}[1]{\todo{TW: #1}}
\newcommand{\nic}[1]{\todo{NF: #1}}

\title{Quantum Robustness Verification: A Hybrid Quantum-Classical Neural Network Certification Algorithm
\thanks{The project/research is supported by the Bavarian Ministry of Economic Affairs, Regional Development and Energy with funds from the Hightech Agenda Bayern.}
}

\author{
    \IEEEauthorblockN{Nicola Franco\IEEEauthorrefmark{2}, Tom Wollschl\"ager\IEEEauthorrefmark{3}, Nicholas Gao\IEEEauthorrefmark{3}, Jeanette Miriam Lorenz\IEEEauthorrefmark{2}, Stephan G\"unnemann\IEEEauthorrefmark{3}}
    \IEEEauthorblockA{\IEEEauthorrefmark{2}Fraunhofer Institute for Cognitive Systems IKS, Munich, Germany
    \\\{nicola.franco, jeanette.miriam.lorenz\}@iks.fraunhofer.de}
    \IEEEauthorblockA{\IEEEauthorrefmark{3}Dept. of Informatics \& Munich Data Science Institute, Technical Univ. of Munich, Germany
    \\\{tom.wollschlaeger, gaoni, guennemann\}@in.tum.de}
}

\maketitle
\thispagestyle{specialfooter} 

\begin{abstract}
In recent years, quantum computers and algorithms have made significant progress indicating the prospective importance of quantum computing (QC). 
Especially combinatorial optimization has gained a lot of attention as an application field for near-term quantum computers, both by using gate-based QC via the Quantum Approximate Optimization Algorithm and by quantum annealing using the Ising model. 
However, demonstrating an advantage over classical methods in real-world applications remains an active area of research. 
In this work, we investigate the robustness verification of ReLU networks, which involves solving many-variable mixed-integer programs (MIPs), as a practical application. 
Classically, complete verification techniques struggle with large networks as the combinatorial space grows exponentially, implying that realistic networks are difficult to be verified by classical methods. 
To alleviate this issue, we propose to use QC for neural network verification and introduce a hybrid quantum procedure to compute provable certificates. By applying Benders decomposition, we split the MIP into a quadratic unconstrained binary optimization and a linear program which are solved by quantum and classical computers, respectively.
We further improve existing hybrid methods based on the Benders decomposition by reducing the overall number of iterations and placing a limit on the maximum number of qubits required.
We show that, in a simulated environment, our certificate is sound, and provides bounds on the minimum number of qubits necessary to approximate the problem. Finally, we evaluate our method within simulations and on quantum hardware.
\end{abstract}

\begin{IEEEkeywords}
quantum computing, machine learning, adversarial robustness
\end{IEEEkeywords}

\section{Introduction}

Despite recent breakthroughs of machine learning models, they have been shown to be brittle to \textit{adversarial examples} -- small perturbations added to an input designed by an adversary to deceive the model \cite{szegedy2014intriguing, goodfellow2014explaining}. Such adversarial examples differ only by an $\epsilon$ to the original input measured by some norm, e.g., in the image domain these perturbations are often imperceptible to the human eye. This vulnerability raises questions about the applicability of machine learning models in high-risk areas such as autonomous driving \cite{gnanasambandam2021optical} or fraud detection \cite{cartella2021adversarial}.

Consequently, research on (adversarial) robustness of machine learning models has gained importance. Even though empirical defenses, like specific training techniques \cite{kurakin2017adversarial} or architectures \cite{dhillon2018stochastic} improve the robustness of the neural networks and provide little inference time overhead, they are easy to break \cite{carlini2017adversarial, athalye2018obfuscated, goodfellow2018gradient}. Provable certificates for robustness are advantageous as they guarantee the robustness of an instance to any adversarial attack with an admissible perturbation set, e.g., in our case being the infinity norm ball around the input. However, they come at a large computational cost as the search space grows exponentially with increasing network sizes. Even for ReLU networks, which are especially attractive due to the simple nature of their activation function, the complete verification problem has been shown to be NP-complete \cite{katz2017reluplex}.



The development of quantum computing (QC) hardware and software has seen significant progress over the recent years, e.g., recently quantum supremacy has been shown in an academic sample~\cite{Arute:2019zxq}. However, practical applications of QC remain an active area of research as current quantum computers belong to the category of Noisy Intermediate Scale Quantum (NISQ) devices.
They are characterized by their relatively small number (around 100) qubits, noise and relatively small connectivity between the qubits. Therefore, it is infeasible to perform large and complex calculation solely on a quantum computer. %
Instead, current algorithms are of hybrid nature and require an interaction between classical and quantum computers.
With the motivation of exploiting the superposition of the exponentially large space of qubits, entanglement and interference effects,  combinatorial optimization has received a lot of attention both within the gate-based QC and quantum annealing (QA) via the Quantum Approximate Optimization Algorithm (QAOA) \cite{farhiQuantumApproximateOptimization2014} and the Ising model, respectively~\cite{dasQuantumAnnealingRelated2005}.

In this work, we propose to leverage the combinatorial power of QC to verify the robustness of ReLU networks by proposing a hybrid quantum-classical algorithm. 
Since the ReLU non-linearity may be reformulated as a binary variable, one can represent the verification problem as a mixed-integer program (MIP) \cite{ehlers2017formal, tjeng2018evaluating}.
We apply Benders decomposition to partition the MIP into a quadratic unconstrained binary optimization (QUBO) and a linear program (LP) \cite{benders1962, chang2020quantum}. Thus, the resulting algorithm is an iterative process where the LP generates cuts for the QUBO which is solved by a quantum computer or a quantum annealer.
To the best of our knowledge, this is the first work of certifying neural networks with quantum computers. As the current hardware cannot compete with classical solvers, this algorithm should be understood as a proof of concept.
However, as the availability of increasingly more powerful computers increases, we may see polynomial speed-ups~\cite{groverFastQuantumMechanical1996}.


Moreover, we show that our method is sound in an ideal, simulated setting and converges to the exact solution in the limit. For the modifications required for current QC hardware, we show that our optimization objective is conservative, i.e., the objective value of our approach is a lower bound to the one of the exact verification. Additionally, we derive bounds for the minimum number of qubits necessary and provide an experimental evaluation of our approach on a conventional computer, a quantum annealing simulation as well as real QC hardware. 


In summary, our core contributions are:
\begin{enumerate}
    \item Introducing the first Hybrid Quantum-Classical Robustness Algorithm for Neural network verification \textit{HQ-CRAN}. 
    \item Proving the soundness of the certificate.
    \item Estimating the number of qubits necessary to achieve reasonable certificate quality, absent in \citet{chang2020quantum}. 
    \item Decreasing the overall number of qubits and required amount of iterations to converge compared to \citet{chang2020quantum} and \citet{zhao2021hybrid}.
    \item Evaluating HQ-CRAN on quantum annealers and computers.
\end{enumerate}%
\section{Related Work}

\textbf{Adversarial Robustness }
The research field can be divided into two main categories, namely \textit{empirical-} and \textit{provable defenses}. While empirical defenses such as using specific network architectures \cite{Gong2017a, dhillon2018stochastic, Sitawarin2019}, adversarial training \cite{kurakin2017adversarial, Shafahi2019} or data preparation \cite{guo2018countering, Buckman2018} are easier to use in practice and require low additional inference time complexity, they are only evaluated against the current state-of-the-art adversarial attacks. However, it is not guaranteed that a newly introduced attack will not circumvent the defense \cite{athalye2018obfuscated, goodfellow2018gradient}. Contrary, a provable or certifiable verification means that the network is robust for a given perturbation budget regardless of the attack method. It follows that for all perturbed inputs within the budget, the prediction of the network remains the same.  

Within this line of research, \textit{exact} or \textit{complete} verification methods focus on reasoning about the adversarial polytope, i.e. the shape containing all possible outputs given the set of perturbed inputs. Since these exact verification methods \cite{bunel2020branch, tjeng2018evaluating, katz2017reluplex, ruan2018reachability} do not scale to large networks due to their exponential time complexity, \textit{incomplete} methods \cite{8418593, wong2018provable, mueller2021scaling, wang2021betacrown, xu2020automatic, dathathri2020enabling, muller2021prima} adopt convex approximations to overcome the non-linearity of the network and reduce the overall complexity. The resulting output of these approaches is the worst-case point of the exact adversarial polytope in case of a \textit{complete} method, and of an outer approximation in case of an \textit{incomplete} one. 
Due to the outer approximation, the resulting certificates are not sharp, i.e. there might be cases where it fails to verify robustness even though the prediction of the sample could not be changed given the tested budget.

\textbf{Hybrid Decomposition Algorithms }
Recently, a multitude of quantum optimization algorithms have been proposed for NISQs. The common goal is to solve large-scale optimization problems that are otherwise computationally intractable by pure classical hardware. \citet{gambella2020multiblock} presented a decomposition method based on the alternating direction method of multipliers (ADMM). The method is a heuristic algorithm which splits a mixed-binary optimization problem into a QUBO, solved with QC, and a LP via a multi-block version of ADMM. Similarly, \citet{chang2020quantum} and \citet{zhao2021hybrid} used Benders decomposition to divide a MIP into a QUBO and a LP. To formulate the QUBO, the methods require the approximation of real variables into binaries (or qubits). In addition to this, at each iteration a new real variable is constantly added, leading to an always increasing number of qubits. In this work, we improve \cite{chang2020quantum, zhao2021hybrid} by reducing the overall number of iterations and by placing a limit to the maximum number of qubits required.
\section{Background and Preliminaries} \label{sec:background}

\textbf{Notation } We use lower case Latin and Greek letters $a,b,\dots,\alpha,\beta,\dots$ for scalars, bold $\vect{a}$ for vectors, capitalized bold $\matr{A}$ for matrices, and calligraphic $\mathcal{A}$ for sets. Furthermore, we use $\vect{a}_i$ to denote the $i$-th element of $\vect{a}$ and $\lvert \cdot \rvert$ to denote the cardinality of a set. \texttt{diag}$\left(\vect{v}\right)$ denotes a diagonal matrix with all vector entries on the main diagonal and zero elsewhere. We denote with $\odot$ the element-wise product and with $\otimes$ the outer product between two vectors. The identity matrix in $\mathbb{R}^{n\times n}$ is $\mathbf{I}_n$, while $\mathbf{O}_{n \times m}$ is the zero matrix in $\mathbb{R}^{n \times m}$. The column vector of $n$ ones is denoted by $\mathds{1}_n$, and $\vect{0}_n$ is the zero column vector in $\mathbb{R}^{n}$.

\textbf{Model Formulation }
We define a neural network by a function $\vect{f}(\vect{x}) \colon \mathcal{X} \to \mathbb{R}^{\lvert \mathcal{Z} \rvert}$ which maps input samples $\vect{x} \in \mathcal{X}$ to output $\vect{z} \in \mathbb{R}^{\lvert \mathcal{Z} \rvert}$, where $\mathcal{Z}$ is the set of classes. We assume a feedforward architecture composed by affine transformations and followed by ReLU activation functions:
\begin{equation*}
\begin{aligned}
\hat{\vect{z}}^{[i]} &= \vect{W}^{[i]}\vect{z}^{[i-1]} + \vect{v}^{[i]}, \\
\vect{z}^{[i]} &= \max{\{0, \hat{\vect{z}}^{[i]}\}}, \quad \forall i \in \left\{1,\dots, L\right\},
\end{aligned}
\end{equation*}
where $L$ represents the number of layers, $\vect{z}^{[0]} \equiv \vect{x}$ and $\vect{f}(\vect{x}) \equiv \vect{z}^{[L]}$. In case of classification, the network outputs a vector in $\mathbb{R}^{\abs{\set Z}}$. The predicted class is then given by the index of the largest value of that vector, i.e. $c = \argmaxinline_{j}\vect{f}(\vect{x})_j$.

\textbf{Robustness Certificate and Threat Model } 
Let $\vect{x}$ be an input, e.g. a vectorized image. We say that a neural network $f$ is certifiably robust for this input if the prediction for all perturbed versions remains unchanged:
\begin{equation} \label{eq:certificate}
\argmaxinline_j \vect f(\vect{x})_j = \argmaxinline_j \vect f(\Tilde{\vect{x}})_j \quad \forall \Tilde{\vect{x}} \in \set{B}_\epsilon^p(\vect{x}).
\end{equation}
Here, $\epsilon$ is the perturbation budget and $\Tilde{\vect{x}}$ is an element from the perturbation set  $\set{B}_\epsilon^\infty(\vect{x})$ based on the infinity norm:
\begin{equation}
\set{B}_\epsilon^\infty(\vect{x}) = \{\Tilde{\vect{x}} \,|\,\norm{\vect{x} - \Tilde{\vect{x}}}_\infty \leq \epsilon\}. 
\end{equation}
If we cannot certify an input, it means that there exists $\vect{x}' \in \set{B}_\epsilon^\infty(\vect{x})$ for which $\argmaxinline_j\vect{f}(\vect{x})_j \neq \argmaxinline_j\vect{f(\vect{x}')}_j$. We call any of these $\vect{x}^\prime$ an adversarial example.

\textbf{Verification as Optimization } We can formulate \autoref{eq:certificate} as an optimization task. The robustness verification can be written as\footnote{This formulation is for a network that uses a ReLU activation after the last transformation layer, i.e. the output of the network is positive.}:
\begin{subequations} \label{eq:general_verification}
\begin{alignat}{2}
    &\min_{t} \min_{\vect{z}^{[i]} \in \mathbb{R}^{n_{i}}}&\qquad&\vect{z}^{[L]}_c - \vect{z}^{[L]}_t \label{eq:general_objective}\\
    &\,\text{subject to} &&\vect{z}^{[0]}\in \set{B}^{p}_{\epsilon}(\vect{x}), \\
    &&&\hat{\vect{z}}^{[i]} = \vect{W}^{[i]}\vect{z}^{[i-1]} + \vect{v}^{[i]}, \\
    &&&\vect{z}^{[i]} = \max\{0, \hat{\vect{z}}^{[i]}\}, \label{eq:general_relu}\\
    &&&\forall i \in \left\{1,\dots,L\right\}.
\end{alignat}
\end{subequations}
where $c$ is the index for the predicted class, $t \neq c$ is the target class and $\left\{1,\dots,L\right\}$ is the set of network layers. Thus, we compare the outputs of the neural network to predict any class other than the initially predicted one. If \autoref{eq:general_objective} is positive, it means that we are not able to change the prediction, i.e. not able to fool the network and can issue a robustness certificate. 

The piece-wise linear nature of ReLU activation units formulated in \autoref{eq:general_relu} characterizes problem~\autoref{eq:general_verification} as non-convex. There are two ways to solve this problem. Either we model the ReLU activation with an integer variable or we enclose the possible activation values $\vect{z}^{[i]}$ with a convex area. The former approach yields \textit{complete} formulation of the exact polytope but the integer variables render the problem NP-hard. The latter approach is an instance of an \textit{incomplete} solution.

\textbf{Complete Verification }
In case of complete verification, we follow~\citet{tjeng2018evaluating} and formulate the ReLU activation units as a set of linear and integer constraints. For this formulation we assume that we have lower and upper bounds on the output of the network layer $\bm{\ell}, \vect u$ respectively. With the help of binary variables $\vect{y}$, we can then express the ReLU with linear constraints as: 
\begin{equation}\label{eq:complete_verification}
    \begin{aligned}
    &\vect{z} = \max\left\{0, \hat{\vect{z}}\right\}, \\
    &\bm{\ell} \leq \hat{\vect{z}} \leq \vect{u},
    \end{aligned}
    \quad \Longleftrightarrow \quad
    \begin{aligned}
        &\vect{z} \leq \hat{\vect{z}} - \bm{\ell} \odot \left( \mathds{1}_{n} - \vect{y} \right), \\ 
        &\vect{z} \geq \hat{\vect{z}}, \\
        &\vect{z} \leq \vect{u} \odot \vect{y}, \\
        &\vect{z} \geq \vect{0}_{n}, \\
        &\vect{y} \in \left\{0, 1\right\}^{n}.
    \end{aligned}
\end{equation}
Thus, to reason about the exact adversarial polytope we solve the program of \autoref{eq:general_verification} but replace the ReLU constraint \autoref{eq:general_relu} with \autoref{eq:complete_verification}. Note, the solution of the program is exact, independent of the tightness of the bounds $\bm{\ell}$ and $\vect{u}$. 

\textbf{Incomplete Verification }
There exist several approaches to decrease the runtime of the verification by sacrificing exactness. A straightforward way is to relax the ReLU constraint and arrive at a convex optimization problem. For this, we can enclose $\vect z$ inside a convex envelope determined by the bounds $\bm \ell, \vect u$ \cite{wong2018provable}:
\begin{equation}
    \begin{aligned}
        &\vect z,
        &\vect z \geq \hat{\vect{z}},\quad
        &(\vect{u} - \bm{\ell})\odot \vect{z} - \vect u \odot \hat{\vect{z}} \leq -\vect{u} \odot\bm{\ell}.\label{eq:convex_relaxation}
    \end{aligned}
\end{equation}
The relaxation of the ReLU constraint results in a convex optimization problem and can be solved rather efficiently but leads to an outer approximation of the adversarial polytope. This leads to a lower bound of the objective of \autoref{eq:general_verification}:
\begin{equation}
\underline{\vect{z}}^{[L]}_c - \overline{\vect{z}}^{[L]}_t \leq  {\vect{z}^*_c}^{[L]} - {\vect{z}_t^*}^{[L]}. 
\end{equation}

Here, ${\vect{z}^*}^{[L]}$ denotes the optimal value as given by the exact formulation and $\underline{\vect{z}}^{[L]}, \overline{\vect{z}}^{[L]}$ are the lower and upper bounds respectively. 

\textbf{Quadratic Unconstrained Binary Optimization (QUBO)}
Many problems in finance, economics and machine learning may be formulated as QUBO problems~\cite{QUBO}:
\begin{equation}
    \min_{\vect{x}\in\left\{0, 1\right\}^n} \vect{x}^\intercal \matr{Q} \vect{x} + \vect{q}\vect{x},
\end{equation}
where $\vect{x}$ is vector of binary variables, $\matr{Q} \in \mathbb{R}^{n\times n}$ is a square matrix and $\vect{q}\in\mathbb{R}^n$ is a row vector.

In QC, the QUBO formulation gained lots of attention due to its close resemblance to the Ising model~\cite{lucas2014ising}, a physical model directly linked to spin states, with main difference being that states take values in $\{-1, 1\}$ in Ising and $\{0,1\}$ in QUBO.
So, to get from an Ising model to the corresponding QUBO, one has to transform the states by $s=2x-1$ and adjust $\vect{h}$, and $\matr{J}$ accordingly:
\begin{equation}
     \min_{\vect{s}\in\left\{-1, 1\right\}^n} -\sum_{i}\vect{h}_i \vect{s}_i - \sum_{ij} \matr{J}_{ij}\vect{s}_i\vect{s}_j.
\end{equation}

\textbf{Quantum Annealing (QA)} aims at directly solving such Ising problems on quantum hardware~\cite{johnsonQuantumAnnealingManufactured2011} with the only restriction being the connectivity of the qubits on QA.
While in simulation and theory we may connect any qubit to any other qubit, this is not possible on current hardware.
To circumvent this issue, one may represent one logical unit $s_i$ by multiple qubits with large coupling weights $\matr{J}_{ij}$ in between~\cite{adachiApplicationQuantumAnnealing2015}.
While proven to be a suboptimal choice~\cite{marshallPerilsEmbeddingSampling2020} it remains an open research question to find suitable strategies of overcoming these limits of current hardware.


Thus, we can obtain solutions for the QUBO problem by finding the ground state of the Ising model. In QA, we achieve this by starting from a known initial state which we slowly evolve towards the ground state of the problem. For a detailed description we refer the reader to~\cite{dasQuantumAnnealingRelated2005}.

\textbf{Quantum Approximate Optimization Algorithm (QAOA)}~\cite{farhiQuantumApproximateOptimization2014} is a hybrid quantum method for solving Ising problems. The algorithm is considered as an excellent candidate for NISQ devices due to its hybrid structure with an iteration between conventional computers and QC. Its potential in achieving a practical quantum advantage, however, remains unclear and is intensively studied~\cite{farhi2019quantum,QAOAhundreds}. In comparison to QA, the use of the QAOA on gate-based QC may open up the possibility to consider more complicated Hamiltonians, also going beyond the QUBO problem. The QAOA transforms the problem Hamiltonian into a sequence of local Hamiltonians and evaluates approximate ground states to obtain an approximate solution of the original optimization problem.

\section{HQ-CRAN} \label{sec:method}

In this section, we provide our main contribution. Given the exponential complexity of a complete verification method, we highlight the potential of Benders decomposition to exploit NISQ devices and run part of the problem on QC while generating constraints on a classical one. Benders decomposition is an algorithm for mixed-variables programming that has been successfully applied to a variety of applications \cite{benders1962}. Recently, \citet{chang2020quantum} and \citet{zhao2021hybrid} proposed its use on QC. Following their approach, we present the first application to the verification problem of neural networks and supplement it with QC-specific improvements.
In a simulation with a sufficient number of qubits, our algorithm theoretically converges to the value of the exact solution \cite{benders1962}.%

We consider a problem with a complicating variable $\vect y$, e.g., a binary variable in our case, rendering our problem to be an instance of the class of MIP problems. The main components of our method are two programs, one LP and one QUBO. These are alternatively solved until convergence or another stopping criterion is met:

\begin{enumerate}
    \item Solve LP with a given variable binary $\vect{y}$ which provides us with an optimal objective and a new constraint.
    \item Add the constraint to the QUBO and compute the optimal value and $\vect{y}$ on the NISQ device.
    \item If not converged go to 1.
\end{enumerate}
When the objectives of both programs meet, we can stop the optimization and issue a certificate if this value is positive. Due to the hardware constraints, we have to use approximations within the QUBO program leading to a non exact certificate but rather a conservative bound. This means that our certificate is valid and in the ideal, simulated scenario, we can converge to the exact solution with enough iterations. 
However, in settings where the number of iterations or qubits is limited, HQ-CRAN is an incomplete verification method rendering HQ-CRAN as a mixture of complete and incomplete verification.
\subsection{Benders Decomposition}
For a proper certificate, we have to test whether we can change the network's prediction into any other possible class as in \autoref{eq:general_verification}. However, for simpler notation, we now only consider the inner minimization problem, i.e. only testing the difference between the initial predicted class and one other. To avoid cluttered notation, we write the complete verification problem in a concise matrix formulation as:
%
\begin{equation}\label{eq:setup}
\min_{\vect{z},\vect{y}} \left\{\vect{g}^{\intercal}\vect{z} 
    \,| \, \matr{A}\vect{z} + \matr{B}\vect{y} \geq \vect{b}, \; \matr{C}\vect{z} \geq \vect{d} \right\}.
\end{equation}
%
The matrices and vectors $\matr{A}$, $\matr{B}$, $\matr{C}$, $\vect{b}$, and $\vect{d}$ model the whole network and the exact construction procedure is explained in \autoref{alg:matrices_construction}, where we employ interval arithmetic to propagate the boundaries through the network \cite{tjeng2018evaluating}. The vector $\vect{z} \in \mathbb{R}^{n_z}$ considers all network logits with $n_z = n_0 + \dots + n_L$ equal to the total number of neurons, and $\vect{y} \in \left\{0, 1 \right\}^{n_y}$ with $n_y$ equal to the number of unstable neurons (i.e. a neuron is $unstable$ when its boundaries are $\vectsym{\ell}^{[i]}_j < 0 < \vect{u}^{[i]}_j$; otherwise we can express it with a constant being either 0 or 1). The vector $\vect{g}$ encodes the difference between the true class and another to be tested against, e.g. $\vect{g} = \left(\dots, 0, 1, 0, -1, 0\right)^\intercal$. 

\begin{algorithm} 
\caption{Generation of problem matrices}
  \label{alg:matrices_construction}
   \begin{algorithmic}
   \Function{Build}{$\vect{x}, \epsilon, \matr{W}, \vect{v}$}
    \State \textbf{initialize:} $n_y \leftarrow 0$, $c\leftarrow \argmaxinline_j\vect{f}(\vect{x})_j$, $\vect{g} \leftarrow \vect{0}_{n_z}$
    \State $\vectsym{\ell}^{[0]},\, \vect{u}^{[0]}  \leftarrow \vect{x} - \epsilon \cdot \mathds{1}_{n_0},\, \vect{x} + \epsilon \cdot \mathds{1}_{n_0}$
    \For{$(\matr{W}^{[i]}, \vect{v}^{[i]})$ \textbf{in} $(\matr{W}, \vect{v})$} \Comment{where $^{[i]}$ is omitted}
        \State $\vectsym{\ell}, \vect{u} \leftarrow $ computed with Interval Arithmetic~\cite{tjeng2018evaluating}
        \State $\matr{M} \leftarrow (..., \mathbf{O}_{n_i\times n_{i-2}}, -\matr{W}, \mathbf{I}_{n_i}, \mathbf{O}_{n_i\times n_{i+1}}, ...)^\intercal$
        \State $\matr{N} \leftarrow (..., \mathbf{O}_{n_i\times n_{i-1}}, \mathbf{I}_{n_i}, \mathbf{O}_{n_i\times n_{i+1}}, ...)^\intercal$
        \For{$(u_j, \ell_j)$ \textbf{in} $(\vect{u}, \vectsym{\ell})$}
            \If {$\ell_j \geq 0$ } \Comment{stable active}
                \State $\matr{\hat{A}}_{:, j},\,\,\, \vect{\hat{b}}_j,\,\,\, \vect{\check{b}}_j \leftarrow -{\matr{M}_{:, j}}, -\vect{v}_j, -u_j$
                \State $\matr{\hat{B}}_{:, j},\,\,\, \matr{\check{B}}_{:, j} \leftarrow \vect{0}_{n_y},\,  \vect{0}_{n_y} $
            \ElsIf {$u_j \leq 0$ } \Comment{stable inactive}
                \State $\matr{\hat{A}}_{:, j},\,\,\, \vect{\hat{b}}_j,\,\,\,\vect{\check{b}}_j \leftarrow \vect{0}_{n_z},  -\vect{v}_j,\, 0$
                \State $\matr{\hat{B}}_{:, j},\,\,\, \matr{\check{B}}_{:, j} \leftarrow \vect{0}_{n_z},\,\,\, \vect{0}_{n_y}$
            \Else \Comment{unstable neuron}
                \State $\matr{\hat{A}}_{:, j},\,\,\, \vect{\hat{b}}_j,\,\,\, \vect{\check{b}}_j \leftarrow -{\matr{M}_{:, j}},\, 0,\, -\vect{v}_j$
                \State $\matr{\hat{B}}_{:, j},\,\,\, \matr{\check{B}}_{:, j} \leftarrow (0\,..\,\ell_j\,..\, 0)^\intercal,\, (0\,..\,u_j\,..\, 0)^\intercal$
                \State $n_y \leftarrow n_y + 1$
            \EndIf
        \EndFor
    \EndFor
    \State $\matr{A} \leftarrow (\matr{\hat{A}}^{[1]}, ..., \matr{\hat{A}}^{[L]},-\matr{N}^{{[1]}},...,-\matr{N}^{{[L]}})^\intercal$
    \State $\matr{B} \leftarrow (\matr{\hat{B}}^{[1]}, ..., \matr{\hat{B}}^{[L]},\matr{\check{B}}^{{[1]}},...,\matr{\check{B}}^{{[L]}})^\intercal$
    \State $\matr{T} \leftarrow (\mathbf{I}_{n_0}, \mathbf{O}_{n_0\times(n_1+\dots+n_L)})^\intercal$
    \State $\matr{C} \leftarrow (\matr{T}, -\matr{T}, \matr{M}^{{[1]}}, ..., \matr{M}^{{[L]}}, \matr{N}^{{[1]}},... \matr{N}^{{[L]}})^\intercal$
    \State $\vect{b} \leftarrow (\hat{\vect{b}}^{[1]}, ...,\hat{\vect{b}}^{[L]},\check{\vect{b}}^{[1]},...,\check{\vect{b}}^{[L]})^\intercal$
    \State $\vect{d} \leftarrow (\vectsym{\ell}^{[0]}, -\vect{u}^{[0]}, \vect{v}, \vect{0}_{n_1+\dots+n_L})^\intercal$
    \State $\vect{g}_{n_z - c} \leftarrow 1$
    \State \Return $\matr{A}, \matr{B}, \matr{C}, \vect{b}, \vect{d}, \vect{g}$
\EndFunction
\end{algorithmic}
\end{algorithm}

We decompose \autoref{eq:setup} into two sub-problems: A binary unconstrained problem and a linear programming optimization problem. Let us rewrite \autoref{eq:setup} as $\textstyle\min_{\vect{y}} q(\vect{y})$ where: 
\begin{equation}\label{eq:reformulated}
    q(\vect{y}) = \min_{\vect{z}} \left\{\vect{g}^{\intercal}\vect{z}
   \,| \, \matr{A}\vect{z} + \matr{B}\vect{y} \geq \vect{b}, \; \matr{C}\vect{z} \geq \vect{d} \right\}.
\end{equation}
Here we view the vector of binary variables $\vect y$ as given. Hence, we decoupled $\vect y$ from the rest of the program resulting in a LP. The dual formulation of $q(\vect{y})$ is given as:
\begin{equation}\label{eq:dual}
    \max_{\vectsym{\alpha}, \vectsym{\beta}} \; \left\{\vectsym{\alpha}\left( \vect{b} - \vect{B} \vect{y} \right) + \vectsym{\beta} \vect{d} \; | \; \vectsym{\alpha} \matr{A} + \vectsym{\beta} \matr{C} = \vect{g}^\intercal \right\},
\end{equation}
where $\vectsym{\alpha}\in \mathbb{R}^{m_b}_+$ and $\vectsym{\beta}\in \mathbb{R}^{m_d}_+$ are row vectors.
Since $\vect{y}$ is constant within the optimization of \autoref{eq:reformulated}, the optimization program is a linear program and we thus have strong duality\footnote{i.e. the optimal objective value of the primal equals the optimal value of the dual.} 
The optimal objective value of \autoref{eq:dual} is infinity if \autoref{eq:reformulated} is infeasible. If \autoref{eq:reformulated} is feasible it has to have the same finite value. Thus, it should be either finite or positive infinity. 
Therefore, for an optimal value of the dual formulation of \autoref{eq:dual}, $\vect{g}^\intercal - \vectsym{\alpha}\matr{A} - \vectsym{\beta}^\intercal \matr{C}$ has to be zero. 
Hence, we know the optimal feasible objective will only be the result of the remaining terms that are not interacting with $\vect{z}$ as all of them need to cancel out. 
Following Benders decomposition, we can formulate the objective of \autoref{eq:dual} as a linear combination of extreme rays and points of the feasible region. We denote as $\Lambda_r$ and $\Lambda_p$ the set of extreme rays and extreme points of the set $\left\{ (\vectsym{\alpha}, \vectsym{\beta}) \ | \  \vectsym{\alpha} \matr{A} + \vectsym{\beta} \matr{C} = \vect{g}^\intercal, \vectsym{\alpha} \geq 0, \vectsym{\beta} \geq 0 \right\}$. 

Similarly to \citet{benders1962}, we rewrite the optimization of \autoref{eq:setup} to obtain:
\begin{subequations} \label{eq:master_problem}
\begin{alignat}{2} 
    &\min_{\vect{y},\, \eta} \, \eta, \\
    \text{s.t.} \
    &\vectsym{\alpha}^{[k]}\left( \vect{b} - \matr{B} \vect{y} \right) + \vectsym{\beta}^{[k]}\vect{d}\leq \eta, \; \forall (\vectsym{\alpha}^{[k]}, \vectsym{\beta}^{[k]}) \in \Lambda_p, \label{eq:master_extreme_point}\\
    &\vectsym{\alpha}^{[k]}\left( \vect{b} - \matr{B} \vect{y} \right) + \vectsym{\beta}^{[k]} \vect{d} \leq 0, \; \forall (\vectsym{\alpha}^{[k]}, \vectsym{\beta}^{[k]}) \in \Lambda_r, \label{eq:master_extreme_ray}
\end{alignat}
\end{subequations}
where $\eta\in\mathbb{R}$ is a scalar. This problem is known as \textit{master} problem on Benders decomposition. 

\begin{theorem}\label{thm:soundness}
Given a neural network $\vect f$, an input $\vect{x}$ and logits $\vect{z}$ and following \autoref{alg:matrices_construction} to create constraint matrices $\matr{A}, \matr{B}, \matr{C}$ as well as vectors $\vect{b}, \vect{d}, \vect{g}$, the verification of robustness can be evaluated through the optimization of \autoref{eq:master_problem}.
\end{theorem}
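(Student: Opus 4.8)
The plan is to prove \autoref{thm:soundness} in two stages: first establish that \autoref{eq:setup} is a faithful matrix encoding of the complete verification problem, and then show that the Benders master problem \autoref{eq:master_problem} is equivalent to \autoref{eq:setup}.

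First I would verify that \autoref{alg:matrices_construction} reproduces the complete verification MIP obtained by substituting the mixed-integer ReLU encoding \autoref{eq:complete_verification} into \autoref{eq:general_verification}. The argument is a case distinction over the three neuron types handled by the algorithm. For a stable-active neuron ($\ell_j \ge 0$) the ReLU acts as the identity, so the algorithm drops the binary variable (zeroing the corresponding columns of $\matr{\hat{B}}, \matr{\check{B}}$) and keeps only the affine relation; for a stable-inactive neuron ($u_j \le 0$) the output is clamped to zero; and for an unstable neuron ($\ell_j < 0 < u_j$) the columns of $\matr{B}$ are populated with $\ell_j$ and $u_j$ so that the rows of $\matr{A}\vect z + \matr{B}\vect y \ge \vect b$ reproduce exactly the two binary-gated inequalities $\vect z \le \hat{\vect z} - \bm{\ell}\odot(\mathds{1} - \vect y)$ and $\vect z \le \vect u \odot \vect y$ of \autoref{eq:complete_verification}. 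In parallel, I would check that the block $\matr{C}\vect z \ge \vect d$ collects the remaining constraints not involving $\vect y$: the input box $\vect z^{[0]} \in \set{B}_\epsilon^\infty(\vect x)$ (via the $\matr{T}, -\matr{T}$ blocks against $\bm{\ell}^{[0]}, -\vect u^{[0]}$), the preactivation identities $\hat{\vect z}^{[i]} = \matr{W}^{[i]}\vect z^{[i-1]} + \vect v^{[i]}$ (the $\matr{M}$ blocks), and the lower envelope $\vect z \ge \vect 0$, $\vect z \ge \hat{\vect z}$ (the $\matr{N}$ blocks). Together with $\vect g$ selecting the logit difference $\vect z^{[L]}_c - \vect z^{[L]}_t$, this shows \autoref{eq:setup} equals the complete verification objective, whose sign certifies robustness as already discussed below \autoref{eq:general_verification}.

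Second, I would carry out the Benders reduction. Fixing $\vect y$ turns \autoref{eq:setup} into the inner LP $q(\vect y)$ of \autoref{eq:reformulated}; since $\vect y$ enters only the right-hand side, LP strong duality gives that $q(\vect y)$ equals the optimum of the dual \autoref{eq:dual} whenever the primal is feasible, and equals $+\infty$ (dual unbounded) otherwise. The crucial structural observation is that $\vect y$ appears only in the dual objective and never in the dual feasible set $\{(\vectsym{\alpha},\vectsym{\beta}) : \vectsym{\alpha}\matr{A} + \vectsym{\beta}\matr{C} = \vect g^\intercal,\ \vectsym{\alpha},\vectsym{\beta}\ge 0\}$, so this polyhedron, and hence its extreme-point set $\Lambda_p$ and extreme-ray set $\Lambda_r$, is fixed once and for all, independent of $\vect y$. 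By the Minkowski--Weyl resolution theorem, maximizing the linear dual objective over this polyhedron is $+\infty$ precisely when the objective is positive along some extreme ray in $\Lambda_r$, and is otherwise attained at an extreme point in $\Lambda_p$. Thus $q(\vect y)$ is finite iff the feasibility cuts \autoref{eq:master_extreme_ray} all hold, in which case $q(\vect y) = \max_{(\vectsym{\alpha}^{[k]},\vectsym{\beta}^{[k]}) \in \Lambda_p} [\vectsym{\alpha}^{[k]}(\vect b - \matr{B}\vect y) + \vectsym{\beta}^{[k]}\vect d]$. Introducing the epigraph scalar $\eta$ replaces this inner maximum by the optimality cuts \autoref{eq:master_extreme_point} together with $\min\eta$, and composing with the outer $\min_{\vect y}$ yields exactly \autoref{eq:master_problem}.

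The main obstacle I anticipate is not conceptual but the bookkeeping of the encoding step: one must track the column and row indices through $\matr{M}, \matr{N}, \matr{\hat{A}}, \matr{\check{B}}$ and confirm that the stacked matrices $\matr{A}, \matr{B}, \matr{C}$ and the right-hand sides $\vect b, \vect d$ align row-by-row with the per-neuron constraints of \autoref{eq:complete_verification}; an off-by-one or sign error here would silently break soundness. The Benders half rests on two standard results (LP strong duality and the polyhedral resolution theorem), so the only real subtlety there is confirming the $\vect y$-independence of the dual feasible region, which I would state explicitly since it is what legitimizes defining $\Lambda_p$ and $\Lambda_r$ globally and reusing the accumulated cuts across iterations.
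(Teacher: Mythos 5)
Your proposal is correct, and it reaches \autoref{eq:master_problem} by a genuinely different route than the paper. The paper's proof takes the encoding step essentially for granted (``as shown in \autoref{alg:matrices_construction}'') and spends its effort converting \autoref{eq:setup} into the exact canonical form of Benders' 1962 theorem: it flips the minimization into a maximization, rewrites the constraints as $\Tilde{\matr{A}}\vect{z} - \vect{F}(\vect{y}) \leq \Tilde{\vect{b}}$, splits $\vect{z} = \vect{x}_1 - \vect{x}_2$ to obtain non-negative variables, sets $f(\vect{y}) = 0$, invokes Benders' Partitioning Theorem 3.1 together with his Lemmas 4.1 and 4.2 as black boxes, and then reverses all transformations (renaming $\vect{u} = (\vectsym{\alpha}\ \vectsym{\beta})$, $\eta = -x_0$) to land at \autoref{eq:master_problem} and \autoref{eq:dual}. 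You instead derive the decomposition from first principles --- fix $\vect{y}$, dualize the inner LP by strong duality, observe that the dual feasible polyhedron $\{(\vectsym{\alpha},\vectsym{\beta}) \,|\, \vectsym{\alpha}\matr{A} + \vectsym{\beta}\matr{C} = \vect{g}^\intercal,\ \vectsym{\alpha},\vectsym{\beta}\geq 0\}$ is independent of $\vect{y}$, and apply the Minkowski--Weyl resolution to split the dual optimum into extreme-ray feasibility cuts and extreme-point optimality cuts, closed off by the epigraph variable $\eta$ --- and you additionally do real work on the encoding half (the three-way neuron case analysis), which the paper merely asserts. What each buys: your route avoids the error-prone chain of sign flips and variable splittings, which is exactly where the printed proof harbors slips (e.g.\ the missing $\vect{x}$ in the constraint of the form $\Tilde{\matr{A}} + \vect{F}(\vect{y}) \leq \Tilde{\vect{b}}$, and the sign of $\vect{F}(\vect{y})$ changing between displays), and it makes explicit the one structural fact --- $\vect{y}$-independence of the dual feasible set --- that legitimizes fixing $\Lambda_p, \Lambda_r$ globally, which the black-box route leaves implicit; the paper's route, in exchange, inherits by citation Benders' full trichotomy (infeasible / feasible without optimum / optimal) without re-proving it, whereas your strong-duality step should acknowledge that when the primal is infeasible the dual may be infeasible rather than unbounded --- harmless here, since the dual feasible set does not depend on $\vect{y}$ and the primal feasible region is a polytope bounded by the interval-arithmetic boxes, but worth one sentence. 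One bookkeeping correction to your first stage: the preactivations $\hat{\vect{z}}$ are eliminated by substitution, so the $\matr{M}$ blocks inside $\matr{C}$ encode the one-sided inequalities $\vect{z}^{[i]} - \matr{W}^{[i]}\vect{z}^{[i-1]} \geq \vect{v}^{[i]}$, i.e.\ the lower envelope $\vect{z} \geq \hat{\vect{z}}$ of \autoref{eq:complete_verification}, not the equality $\hat{\vect{z}} = \matr{W}\vect{z} + \vect{v}$; the $\matr{N}$ blocks supply $\vect{z} \geq \vect{0}$, and the binary-gated upper bounds go into the $\matr{A}\vect{z} + \matr{B}\vect{y} \geq \vect{b}$ block as you describe.
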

\begin{proof}
As shown in \autoref{alg:matrices_construction}, we can equivalently write the initial problems in terms of the matrices $\matr{A}, \matr{B}, \matr{C}$ as well as vectors $\vect{b}, \vect{d}$ and $\vect{g}$. Thus, it remains to show the proper application of Benders theorem. 
We start by rewriting \autoref{eq:setup} to a maximization problem by flipping the objective. Also multiply the constraints with -1 and absorb the minus sign in front of a parameter within the new parameters $\hat{\matr{A}}, \hat{\matr{B}}, \hat{\matr{C}}, \hat{\vect{b}}, \hat{\vect{d}}, \hat{\vect{g}}$: 
\begin{equation}
\textstyle\max_{\vect{z},\vect{y}} \{\hat{\vect{g}}^{\intercal}\vect{z}\, |\,  
\hat{\matr{A}}\vect{z} + \hat{\matr{B}}\vect{y} \leq \hat{\vect{b}}, \,\hat{\matr{C}}\vect{z} \leq \hat{\vect{d}}, \, \vect{z}\in\mathbb{R}^{n_z}\} 
\end{equation}
Now, we change the constraints to fit the form: $\Tilde{\matr{A}}\vect{z} - \vect{F}(\vect{y}) \leq \Tilde{\vect b}$:
\begin{equation*}
	\begin{aligned}
\Tilde{\matr{A}} = \left[ {\begin{array}{cc}
    \hat{\matr{A}} \\
    \hat{\matr{C}} \\
  \end{array} } \right], \,
\vect{F}(\vect{y}) = \left[ {\begin{array}{c}
    \hat{\matr{B}}\vect{y}\\
    \matr{0}\\
  \end{array} } \right], \,
\Tilde{\vect{b}} = \left({\begin{array}{c}
     \hat{\vect{b}}\\
     \hat{\vect{d}}
\end{array}}\right), \,
\Tilde{\vect{g}} = \left({\begin{array}{c}
     \hat{\vect{g}}\\
     \vect{0}
\end{array}}\right),
  \end{aligned}
\end{equation*}
As we want to have our optimization variables constrained to be positive, we express $\vect z$ as the difference of two positive vectors. This can equivalently be expressed with the matrix $\vect M$ constituting of two identity matrices. Formally, we write: 
\begin{equation}
\begin{aligned}
    \vect z = \vect{x}_1 - \vect{x}_2
    = \begin{bmatrix}
        \mathbf{I}_m & \mathbf{0}_m \\
        \mathbf{0}_m & -\mathbf{I}_m
    \end{bmatrix} \vect x 
    \coloneqq \matr{M} \vect x
\end{aligned}
\end{equation}
Inserting the reformulations into our program, we get:
\begin{equation}
\textstyle\max_{\vect{x},\vect{y}}\{\hat{\vect{g}}^{\intercal}\matr{M}\vect{x} \, | \, \Tilde{\matr{A}}\matr{M}\vect{x} + \hat{\matr{B}}\vect{y} \leq \hat{\vect{b}}, \, \vect x \geq \vect{0},\, \vect{x} \in \mathbb{R}^{2n_z}_+\}
\end{equation}
Defining $\Bar{\matr{A}} \coloneqq \Tilde{\matr{A}}\matr{M}, \vect c \coloneqq \matr{M}^\intercal \Tilde{\vect{g}}, f(\vect{y}) = 0$ leads us to the desired form of problems directly suited for Benders theorem \cite{benders1962}:
\begin{equation} \label{eq:initial-bender}
    \max\{\vect{c}^\intercal \vect{x}+ f(\vect y)\, |\, \Tilde{\matr{A}} +  \vect{F}(\vect{y}) \leq \Tilde{\vect{b}}, \vect x \in \mathbb{R}_+^q, \vect y \in \set{S}\}.
\end{equation}

We can now decompose this problem using the 3.1 Partitioning theorem for mixed-variables programming problem \cite{benders1962} and receive two problems, i.e. \textit{master} and \textit{sub} problem. The \textit{master} is given as follows:
\begin{equation} \label{eq:master-bender}
\begin{aligned}
    &\max\{x_0 | (x_0, \vect{y}) \in \set{G}\}, \\
    &\set{G} = \bigcap_{(u_0, \vect{u}) \in \set{C}}\{(x_0, \vect{y})\, |\, u_0 x_0 + \vect{u}^\intercal \vect{F}(\vect{y}) - u_0 f(y) \leq \vect{u}^\intercal \Tilde{\vect{b}}\},
\end{aligned}
\end{equation}
where $\set{C} = \{(u_0, \vect{u}) | \Bar{\matr{A}}^\intercal\vect{u} - \vect c u_0 \geq \vect 0, \vect{u} \geq \vect{0}, u_0 \geq 0\}$. 
The \textit{sub} problem with a fixed $\vect y$ is given by:
\begin{equation}\label{eq:sub-bender}
\begin{aligned}
    \max\{\vect c^\intercal \vect x\, |\, \Bar{\matr{A}} \leq \vect b - \vect F(\vect y), \ \vect x \geq \vect 0\}.
\end{aligned}
\end{equation}
According to the theorem (i) \autoref{eq:initial-bender} is infeasible if and only if \autoref{eq:master-bender} is infeasible. Similarly, (ii) \autoref{eq:initial-bender} is feasible without optimum solution if and only if \autoref{eq:master-bender} is feasible without optimum solution. (iii) If $(x^*, y^*)$ is an optimum solution to initial problem \autoref{eq:initial-bender} and $x_0^*=\vect{c}^\intercal \vect{x}^* + f(y^*)$, then $(x_0^*, y^*)$ is an optimum solution sub problem \autoref{eq:sub-bender}. Lastly, (iv), if $(x_0^*, y^*)$ is an optimum solution to the \textit{master} problem \autoref{eq:master-bender}, then the sub-problem \autoref{eq:sub-bender} is feasible with the optimum value equal to $x^* = x_0^* - f(y)$. If $x^*$ is an optimum solution to the sub problem \autoref{eq:sub-bender}, then $(x^*, y^*)$ is also an optimum solution to the initial formulation, which in our case means $x^* = x_0^*$ as we model $f(y)=0$.

As the computation of the \textit{master} over the entire set $\set{G}$ is usually not feasible in practice, \citet{benders1962} furthermore proposes two Lemmas -- 4.1 and 4.2 -- to add constraints to the set $\set G$ in an iterative way. Given the sets:
\begin{equation}
    \set{C}_0 = \{\vect{u} | \Bar{\matr{A}}^\intercal \vect u \geq \vect 0, \vect u \geq \vect 0\}, \quad \set{P} = \{\vect u\ | \Bar{\matr{A}}^\intercal \vect u \geq \vect 0 \},
\end{equation}
a point $(\vect x, \vect y)$ is contained in $\set G$ if and only if: 
\begin{subequations}
\begin{alignat}{2}
 (b-\vect F(\vect y))^\intercal \vect u \geq 0, \forall \vect{u} \in \set{C}_0\\
 (b-\vect F(\vect y))^\intercal \vect u + f(\vect{y}) \geq x_0, \forall \vect{u} \in \set{P}.
\end{alignat}
\end{subequations}
Now, we can follow the classic iterative approach from \citet{benders1962} to get the optimum solution by iteratively computing the \textit{master} and \textit{sub} problem. To finish our proof of validity for the computation of the certificate, we will next show that when reversing the transformations for the \textit{master} problem, \textit{sub} problem and the sets, we arrive at the proposed problems.

Renaming $\vect u = \left(\vectsym{\alpha}\,\, \vectsym{\beta}\right)$ and inserting the expressions for $\Bar{\matr{A}}, \vect c, \vect{F}(\vect{y})$ and $\Tilde{\vect{b}}$ into the \textit{master} problem yields:
\begin{equation} \label{eq:master-bender2}
\begin{aligned}
    &\max\{x_0\, |\, (x_0, \vect{y}) \in \set{G}\},\\
    &\set{G} = \textstyle\bigcap_{(u_0, \vect{u}^\intercal)\in\set{C}}\{(x_0,\vect{y})\,|\,\vect{u}(\matr{F}(\vect{y}) - \tilde{\vect{b}}) \leq -u_0 x_0\}.
\end{aligned}
\end{equation}
As we had our initial formulation as minimization problem, we rephrase the above into one by flipping the objective and define $\eta \coloneqq -x_0$. 
The resulting problem is:
\begin{equation}\label{eq:bender-proof-master}
\begin{aligned}
    &\min\{\eta \, | \, (\eta, \vect{y}) \in \set{G}\},\\
    &\set{G} = \textstyle\bigcap_{(u_0, \vect{u}^\intercal)\in\set{C}}\{(\eta,\vect{y})\,|\,\vect{u}(\matr{F}(\vect{y}) - \tilde{\vect{b}}) \leq u_0 \eta\}.
\end{aligned}
\end{equation}
Again, a point $(\vect x, \vect y)$ is contained in $\set{G}$ if and only if:
\begin{subequations}\label{eq:bender-proof-master-constraints}
\begin{alignat}{2}
 (b-\vect F(\vect y))^\intercal (\vectsym \alpha\,\, \vectsym \beta)^\intercal \geq 0, \forall (\vectsym \alpha\,\, \vectsym \beta)^\intercal \in \set{C}_0\\
 (b-\vect F(\vect y))^\intercal (\vectsym \alpha\,\, \vectsym \beta)^\intercal + f(\vect{y}) \geq -\eta, \forall (\vectsym \alpha\,\, \vectsym \beta)^\intercal \in \set{P}.
\end{alignat}
\end{subequations}

Now changing the optimization problem back into a minimization program and combining \autoref{eq:bender-proof-master} with \autoref{eq:bender-proof-master-constraints} yields:
\begin{subequations}
\begin{alignat}{2} 
    &\min_{\vect{y},\, \eta \in \mathbb{R}} \, \eta, \\
    \text{s.t.} \;
    &\vectsym{\alpha}\left( \vect{b} - \matr{B} \vect{y} \right) + \vectsym{\beta}\vect{d}\leq \eta,\quad \forall (\vectsym{\alpha},\vectsym{\beta})\in \set{C}_0,\\
    &\vectsym{\alpha}\left( \vect{b} - \matr{B} \vect{y} \right) + \vectsym{\beta}\vect{d}\leq 0,\quad \forall (\vectsym{\alpha},\vectsym{\beta})\in \set{P}.
\end{alignat}
\end{subequations}
In the same way, we can transform the \textit{sub} problem and the constraint sets back and arrive at \autoref{eq:dual}.
\end{proof}


The difficulty of solving \autoref{eq:master_problem} is the exponential size of the sets $\Lambda_p, \Lambda_r$ \cite{chang2020quantum}. Thus, we can gradually extend the sets $\Lambda^\prime_p \subseteq \Lambda_p, \Lambda^\prime_r \subseteq \Lambda_r$ by constraints of the \textit{sub} problem defined as:
\begin{equation} \label{eq:sub_problem}
        \max_{\vectsym{\alpha} \leq \bar{\vectsym{\alpha}},\, \vectsym{\beta} \leq \bar{\vectsym{\beta}}} 
        \left\{\vectsym{\alpha}\left( \vect{b} - \matr{B} \vect{y} \right) + \vectsym{\beta} \vect{d} \ | \
        \vectsym{\alpha} \matr{A} + \vectsym{\beta} \matr{C} = \vect{g}^\intercal \right\}.
\end{equation}
The sub problem is similar to \autoref{eq:dual} except that $\vectsym{\alpha}$ and $\vectsym{\beta}$ are bounded. This bounding idea, introduced by \citet{chang2020quantum}, can be used to identify whether we add a cut belonging to the extreme points or rays. If any of the optimal values of the vectors $\vectsym{\alpha}$ or $\vectsym{\beta}$ is equal to the upper bounds, the constraint belongs to the extreme rays $\Lambda^\prime_r$. Otherwise, it belongs to the extreme points $\Lambda_p^\prime$. As the set of extreme points is bounded, this decision rule is correct as long as the bounds $\bar{\vectsym{\alpha}}, \bar{\vectsym{\beta}}$ are sufficiently large, i.e. larger than any solution of the extreme points set \cite{chang2020quantum}.
\subsection{Improvements}\label{sec:improvements}

As highlighted in \cite{magnanti1981accelerating, ruszczynski1997accelerating, papadakos2008practical}, several ways to improve Benders decomposition exist, including but not limited to (i) a smart selection of cuts and (ii) guiding the choice of $\vect{y}^{[t]}$ at every iteration. 
\citet{magnanti1981accelerating} were the first to note that cuts of different strengths can be generated from the \textit{sub} problem. In their work, they compared cuts according to their strength to obtain the strongest cut possible. Formally, given \autoref{eq:master_problem}, the cut $\vectsym{\alpha}^{[1]}\left(\vect{b} - \matr{B}\vect{y}\right) + \vectsym{\beta}^{[1]}\vect{d}$ \textit{dominates} or is \textit{stronger} than the cut $\vectsym{\alpha}^{[2]}\left(\vect{b} - \matr{B}\vect{y}\right) + \vectsym{\beta}^{[2]}\vect{d}$, if $\vectsym{\alpha}^{[1]}\left(\vect{b} - \matr{B}\vect{y}\right) + \vectsym{\beta}^{[1]} \vect{d} \geq \vectsym{\alpha}^{[2]}\left(\vect{b} - \matr{B}\vect{y}\right) + \vectsym{\beta}^{[2]}\vect{d}$ for all $\vect{y} \in \mathcal{Y}$ with a strict inequality for at least one point $\vect{y}$. Hence, it follows the definition of a \textit{Pareto-optimal} cut.
\begin{definition}[Pareto-optimal~\cite{magnanti1981accelerating}]
A cut is \textit{Pareto-optimal} if it is not dominated by any other cut. 
\end{definition}

To compute a Pareto-optimal cut, \citet{magnanti1981accelerating} integrated to the standard Benders decomposition an additional problem and introduced the notion of core point. A \textit{core point} of $\mathcal{Y}$ is any point $\bar{\vect{y}}^{[t]}$ contained in the relative interior $ri\left(\mathcal{Y}^c\right)$ of the convex hull $\mathcal{Y}^c$ of $\mathcal{Y}$. For further insights of the connection between a core point and a Pareto-optimal cut, we refer the reader to \citet{magnanti1981accelerating}.
The results obtained from Pareto-optimal cuts demonstrated a significant improvement in the convergence of the algorithm. However, finding a good strategy to estimate a core point is not straightforward and limits the use of the Magnanti-Wong method. To further accelerate the solution of this additional problem, \citet{papadakos2008practical} showed how to generate a Pareto-optimal cut with an independent additional problem.
%
\begin{theorem}[Independent Magnanti-Wong~\cite{papadakos2008practical}]
Let $\bar{\vect{y}}^{[t]}$ be a core point of $\mathcal{Y}$, then the optimal solution $\left(\vectsym{\alpha}^{[t]}, \vectsym{\beta}^{[t]}\right)$ of the following problem:
\begin{equation}\label{eq:additional_problem}
    \textstyle\max_{\vectsym{\alpha}, \vectsym{\beta}} \; \left\{\vectsym{\alpha}\left(\vect{b} - \matr{B}\bar{\vect{y}}^{[t]}\right) + \vectsym{\beta}\vect{d} \; |\; \vectsym{\alpha}\matr{A} + \vectsym{\beta}\matr{C} = \vect{g}^\intercal \right\}
\end{equation}
is Pareto-optimal.
\end{theorem}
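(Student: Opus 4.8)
The plan is to argue by contradiction, leaning on the single structural fact that the cut value $\vectsym{\alpha}(\vect{b} - \matr{B}\vect{y}) + \vectsym{\beta}\vect{d}$ is \emph{affine} in $\vect{y}$, combined with the relative-interior property of the core point. First I would fix the optimal solution $(\vectsym{\alpha}^{[t]}, \vectsym{\beta}^{[t]})$ of \autoref{eq:additional_problem} and suppose that the cut it induces is \emph{not} Pareto-optimal, i.e. it is dominated by the cut of some other \emph{dual-feasible} pair $(\vectsym{\alpha}', \vectsym{\beta}')$ satisfying $\vectsym{\alpha}'\matr{A} + \vectsym{\beta}'\matr{C} = \vect{g}^\intercal$, $\vectsym{\alpha}' \geq 0$, $\vectsym{\beta}' \geq 0$. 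By the domination definition this yields $\vectsym{\alpha}'(\vect{b} - \matr{B}\vect{y}) + \vectsym{\beta}'\vect{d} \geq \vectsym{\alpha}^{[t]}(\vect{b} - \matr{B}\vect{y}) + \vectsym{\beta}^{[t]}\vect{d}$ for every $\vect{y} \in \mathcal{Y}$, with strict inequality at some $\vect{y}_0 \in \mathcal{Y}$.

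Next I would introduce the difference $g(\vect{y}) \coloneqq [\vectsym{\alpha}'(\vect{b} - \matr{B}\vect{y}) + \vectsym{\beta}'\vect{d}] - [\vectsym{\alpha}^{[t]}(\vect{b} - \matr{B}\vect{y}) + \vectsym{\beta}^{[t]}\vect{d}]$, which is affine in $\vect{y}$ with $g \geq 0$ on $\mathcal{Y}$ and $g(\vect{y}_0) > 0$. Since $g$ is affine and every element of the convex hull $\mathcal{Y}^c$ is a convex combination of points of $\mathcal{Y}$, the inequality $g \geq 0$ propagates from $\mathcal{Y}$ to all of $\mathcal{Y}^c$. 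The heart of the argument is to upgrade this to the strict statement $g(\bar{\vect{y}}^{[t]}) > 0$: because $\bar{\vect{y}}^{[t]} \in ri(\mathcal{Y}^c)$, the standard relative-interior property lets me extend the segment from $\vect{y}_0$ through $\bar{\vect{y}}^{[t]}$ slightly, obtaining $\vect{y}_1 \in \mathcal{Y}^c$ and $\lambda \in (0,1)$ with $\bar{\vect{y}}^{[t]} = \lambda \vect{y}_0 + (1-\lambda)\vect{y}_1$. Affinity then gives $g(\bar{\vect{y}}^{[t]}) = \lambda\, g(\vect{y}_0) + (1-\lambda)\, g(\vect{y}_1) \geq \lambda\, g(\vect{y}_0) > 0$, using $g(\vect{y}_1) \geq 0$ and $\lambda > 0$.

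Finally, $g(\bar{\vect{y}}^{[t]}) > 0$ is exactly $\vectsym{\alpha}'(\vect{b} - \matr{B}\bar{\vect{y}}^{[t]}) + \vectsym{\beta}'\vect{d} > \vectsym{\alpha}^{[t]}(\vect{b} - \matr{B}\bar{\vect{y}}^{[t]}) + \vectsym{\beta}^{[t]}\vect{d}$. As $(\vectsym{\alpha}', \vectsym{\beta}')$ is feasible for \autoref{eq:additional_problem} yet attains a strictly larger objective at the core point than $(\vectsym{\alpha}^{[t]}, \vectsym{\beta}^{[t]})$, this contradicts the assumed optimality of the latter. Hence no dominating cut exists and the generated cut is Pareto-optimal.

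I expect the relative-interior step to be the delicate part: one must justify writing the core point as a genuine convex combination with strictly positive weight on $\vect{y}_0$ and the remaining mass on a point \emph{still inside} $\mathcal{Y}^c$, which is precisely where the hypothesis $\bar{\vect{y}}^{[t]} \in ri(\mathcal{Y}^c)$ — rather than mere membership in $\mathcal{Y}^c$ — is indispensable. A secondary point requiring care is that any dominating cut must itself be generated by a dual-feasible pair, so that it is an admissible competitor in \autoref{eq:additional_problem}; otherwise the concluding contradiction would not bite.
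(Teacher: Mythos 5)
Your proof is correct, and its structure---contradiction via the affine difference function $g(\vect{y})$, propagation of nonnegativity from $\mathcal{Y}$ to the convex hull $\mathcal{Y}^c$, and the line-extension property of the relative interior to get strict positivity at the core point---is exactly the standard Magnanti--Wong argument as adapted by \citet{papadakos2008practical}. Note that the paper itself offers no proof of this theorem (it is imported with a citation), so there is nothing to diverge from; your reconstruction matches the cited source's proof, including the two genuinely delicate points you flag, namely the strictly positive weight $\lambda$ on $\vect{y}_0$ guaranteed by $\bar{\vect{y}}^{[t]} \in ri\left(\mathcal{Y}^c\right)$ and the fact that any dominating cut comes from a pair feasible for \autoref{eq:additional_problem}.
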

We refer to \autoref{eq:additional_problem} as \textit{additional} problem.
According to \citet{papadakos2008practical}, $\bar{\vect{y}}^{[t]}$ does not have to be a core point of $\mathcal{Y}$ to give a Pareto-optimal cut, but any linear combination, such as $\bar{\vect{y}}^{[t]} \leftarrow \frac{1}{2} \bar{\vect{y}}^{[t]} + \frac{1}{2}\vect{y}^{[t]}$, where $\vect{y}^{[t]}$ is the solution of the \textit{master} problem at every iteration. With this approach, $\bar{\vect{y}}^{[t]}$ gradually approaches a core point \cite{papadakos2008practical}. 

The next improvement comes from the sensitivity of the \textit{master} to changes in the sets of cuts: $\Lambda^\prime_p, \Lambda^\prime_r$. We saw that the initial iterations may be inefficient and the \textit{master} shows an unstable behaviour when it is close to the solution, also called \textit{tailing effect} \cite{ruszczynski1997accelerating}. To overcome these difficulties, \citet{ruszczynski1997accelerating} propose to add an additional quadratic regularizing term to the \textit{master} objective:
\begin{equation}\label{eq:hamming}
    \textstyle\min_{\eta, \vect{y}} \; \eta + \frac{1}{2}\norm{\vect{y} - \vect{y}^{[t-1]}}_2^{2} \quad \text{s.t.} \; \left(\ref{eq:master_extreme_point}\right) - \left(\ref{eq:master_extreme_ray}\right) \; \text{hold},  
\end{equation}
where $\vect{y}^{[t-1]}$ is the solution at the previous iteration and $\norm{\cdot}_2$ is the Euclidian norm. Given the binary nature of $\vect{y}$, we will refer to this term as Hamming distance. It penalizes the current solution to jump far away from the previous one and makes it possible to remove inactive cuts and preserve the size of $\Lambda^\prime_p$ and $\Lambda^\prime_r$. 
\subsection{Bridging to QC}\label{sec:adjustments}

As exposed in \autoref{sec:background}, the QUBO model is used as basis for many NISQ-optimization algorithms. Therefore, we focus on converting the \textit{master} into a QUBO formulation. To this end, we start by adding an artificial variable $a_k \in \mathbb{R}^+$ to relax the inequality and to penalize each constraint:\footnote{w.l.o.g, we consider only the set of optimal cuts $\Lambda^\prime_p$}
\begin{equation}\label{eq:bridge}
\begin{aligned}
    \min_{\vect{y}, \eta, a_1, \dots, a_\tau} &\eta + \frac{1}{2}\norm{\vect{y} - \vect{y}^{[t-1]}}^2 + \\
    \sum_{(\vectsym{\alpha}^{[k]},\vectsym{\beta}^{[k]})\in\Lambda_p^\prime}&\left(\vectsym{\alpha}^{[k]} \left(\vect{b}-\matr{B}\vect{y}\right)+\vectsym{\beta}^{[k]}\vect{d} - \eta + a_k\right)^2
\end{aligned}
\end{equation}
where $k\in\left\{1, \dots, \tau\right\}$ and $\tau = |\Lambda_p^\prime|$. 
Then, we approximate the real variables $\eta$ and $a_k$, with binary vectors $\vect{p}\in\left\{0, 1\right\}^{n_p}$ and $\vect{a}^{[k]}\in\left\{0, 1\right\}^{n_{a_k}}$, respectively. To this end, we used the fixed-point approximation, i.e the two's-complement number system encodes positive and negative numbers in a binary representation as:
\begin{subequations}
\begin{align}
\eta &\approx \vect{w}_p\vect{p} = \omega_p \cdot (- 2^{n_p-1}\vect{p}_{n_p-1}  + \sum_{i = 0}^{n_p - 2} 2^i \vect{p}_i ), \label{eq:eta} \\
a_k &\approx \vect{w}^{[k]}_a\vect{a}^{[k]} = \omega_a \cdot \sum_{i = 0}^{n_{a_k}} 2^i \vect{a}^{[k]}_i, \label{eq:a_k}
\end{align}
\end{subequations}
where $\omega_p, \omega_{a} \in \mathbb{R}^+, 0 < \omega_p, \omega_a \leq 1$ are the penality weights, $n_p, n_{a_k} \in \mathbb{N}$ are the number of qubits and $\vect{w}_p, \vect{w}_a^{[k]}$ are row vectors in $\mathbb{R}^{n_p}$ and $\mathbb{R}^{n_{a_k}}$, respectively. As noticed, $\vect{w}_{a}^{[k]} \vect{a}^{[k]}$ can only reach positive values, where $\vect{w}_p \vect{p}$ can also reach negatives. Given the limited number of qubits of the current gate-based QC, we relate $n_p$ to our objective function. 
\begin{corollary}\label{cor:p}
Let $\vect{u}^{[L]}_c, \vect{u}^{[L]}_t \in \mathbb{R}^{+}$ be the upper bounds in the last neural network layer of the prediction and target class, respectively, then the minimum number of qubits needed to approximate $\eta$ by a factor $\omega_p$ is
\[ n_p \geq 1 + \left\lceil \log_2 \left(1 + \frac{\vect{u}_c^{[L]} + \vect{u}_t^{[L]}}{\omega_p}\right) \right\rceil .\]
\end{corollary}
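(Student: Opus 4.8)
The plan is to read off the range that $\eta$ must be able to represent and then count how many two's-complement bits at resolution $\omega_p$ are required to cover it. First I would pin down the range of $\eta$. Since $\eta$ tracks the objective value $\vect{z}^{[L]}_c - \vect{z}^{[L]}_t$ and, by the footnote assumption, the last transformation is followed by a ReLU so that every logit is non-negative and bounded above by its upper bound, we have $\vect{z}^{[L]}_c \in [0, \vect{u}^{[L]}_c]$ and $\vect{z}^{[L]}_t \in [0, \vect{u}^{[L]}_t]$. Hence $\eta$ ranges over $[-\vect{u}^{[L]}_t, \vect{u}^{[L]}_c]$, an interval of total width $\vect{u}^{[L]}_c + \vect{u}^{[L]}_t$.

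Next I would invoke the encoding in \autoref{eq:eta}. With $n_p$ bits, $\vect{p}_{n_p-1}$ is a dedicated sign bit (coefficient $-2^{n_p-1}$) and the remaining $n_p-1$ bits form the magnitude part $\sum_{i=0}^{n_p-2} 2^i \vect{p}_i$, which takes the integer values $0, 1, \dots, 2^{n_p-1}-1$. Multiplied by $\omega_p$, the attainable positive magnitudes therefore land on the grid $\{0, \omega_p, 2\omega_p, \dots, \omega_p(2^{n_p-1}-1)\}$. To be able to realise the full span of admissible objective values at resolution $\omega_p$, I would require this positive grid to reach at least the width of the range while reserving the sign bit to account for the negative part, i.e. $\omega_p\,(2^{n_p-1}-1) \geq \vect{u}^{[L]}_c + \vect{u}^{[L]}_t$.

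Solving this inequality for $n_p$ is then routine: dividing by $\omega_p$ and rearranging gives $2^{n_p-1} \geq 1 + (\vect{u}^{[L]}_c + \vect{u}^{[L]}_t)/\omega_p$, taking $\log_2$ yields $n_p - 1 \geq \log_2\big(1 + (\vect{u}^{[L]}_c + \vect{u}^{[L]}_t)/\omega_p\big)$, and since $n_p$ must be a natural number the right-hand side is rounded up, producing $n_p \geq 1 + \lceil \log_2(1 + (\vect{u}^{[L]}_c + \vect{u}^{[L]}_t)/\omega_p)\rceil$, exactly as claimed.

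I expect the only genuinely delicate point to be the justification of the coverage requirement $\omega_p(2^{n_p-1}-1) \geq \vect{u}^{[L]}_c + \vect{u}^{[L]}_t$, which is what produces the leading $+1$ (the sign bit) rather than the naive $n_p \geq \log_2(1 + (\vect{u}^{[L]}_c + \vect{u}^{[L]}_t)/\omega_p)$ that would follow from merely counting the $2^{n_p}$ available grid points over the interval. The stated bound is a conservative, sufficient one: because $\eta$ may need to traverse the whole interval $[-\vect{u}^{[L]}_t, \vect{u}^{[L]}_c]$ during the Benders iterations and the most negative reachable objective is $-\vect{u}^{[L]}_t$, it is safest to let the $n_p-1$ magnitude bits alone resolve the entire width $\vect{u}^{[L]}_c + \vect{u}^{[L]}_t$ and keep one separate bit for the sign. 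Everything else is elementary algebra together with the integrality rounding of $n_p$.
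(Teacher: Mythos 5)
Your proposal is correct and follows essentially the same route as the paper: both arguments reduce the claim to the coverage requirement $\omega_p\left(2^{n_p-1}-1\right) \geq \vect{u}_c^{[L]} + \vect{u}_t^{[L]}$ on the maximum of $\vect{w}_p\vect{p}$ and then solve for $n_p$ with the same algebra and rounding. The only (immaterial) difference is in framing: you justify $\vect{u}_c^{[L]} + \vect{u}_t^{[L]}$ as the width of the admissible interval $[-\vect{u}_t^{[L]}, \vect{u}_c^{[L]}]$ with the sign bit reserved for the negative part, whereas the paper derives it as a direct upper bound on $\eta$ via the chain $\eta \leq \vect{g}^\intercal\vect{z} = \vect{z}_c - \vect{z}_t \leq \vect{z}_c + \vect{z}_t \leq \vect{u}_c^{[L]} + \vect{u}_t^{[L]}$, using $\vect{z}_t \geq 0$ and the worst case $\vect{y}_c = \vect{y}_t = 1$.
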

\begin{proof}
Let us consider the upper bounds constraint of the last layer for the ReLU formulation of \autoref{eq:complete_verification}: $\vect{z}^{[L]} \leq \vect{u}^{[L]} \odot \vect{y}^{[L]}$. The relative constraints for the predicted and target class are $\vect{z}_c \leq \vect{u}_c \cdot \vect{y}_c$ and $\vect{z}_t \leq \vect{u}_t \cdot \vect{y}_t$, respectively. Where the index of the last layer has been omitted for the sake of conciseness. Since we focus on containing the maximum attainable value of our objective, only the worst case is considered $\vect{y}_c = \vect{y}_t = 1$, i.e. both neurons are active.
So, we obtain $\vect{z}_c \leq \vect{u}_c$ and $\vect{z}_t \leq \vect{u}_t$. Adding $\vect{z}_t$ to $\vect{z}_c \leq \vect{u}_c$ results in $\vect{z}_c + \vect{z}_t \leq \vect{u}_c + \vect{z}_t \leq \vect{u}_c + \vect{u}_t$.
We also rely on the fact that $\vect{g}^\intercal \vect{z} = \vect{z}_c - \vect{z}_t \leq \vect{z}_c + \vect{z}_t$.

Since the objective of \autoref{eq:master_problem} is $\eta \leq \vect{g}^\intercal \vect{z}$ \cite{benders1962} at each iteration, we obtain:
\begin{equation*}
    \eta \leq \vect{g}^\intercal \vect{z} \leq \vect{u}_c + \vect{u}_t.
\end{equation*}
Lastly, by defining the maximum of $\vect{w}_p \vect{p}$ as $\bar{\eta} \coloneqq \omega_p \cdot \left(2^{n_p - 1} - 1 \right)$ and by imposing:
\begin{equation*}
    \vect{u}_c + \vect{u}_t \leq \omega_p \cdot \left(2^{n_p - 1} - 1 \right),
\end{equation*}
we obtain $\eta \leq \bar{\eta}.$
\end{proof}
To obtain a more compact notation, we introduce the row vector $\vect{h}^{[k]}\in \mathbb{R}^{n_t}$, defined as:
\begin{equation} \label{eq:cut}
    \vect{h}^{[k]} = 
    \begin{pmatrix}
        -\vect{w}_p &-\vectsym{\alpha}^{[k]}\matr{B} &\vect{0}^\intercal_{n_{a_1}} \;\dots\;\vect{w}_{a_k}\;\dots \;\vect{0}^\intercal_{n_{a_\tau}}
    \end{pmatrix},
\end{equation}
where $n_t = n_p + n_y + \dots +n_{a_\tau}$ is the total number of \textit{master} variables (or qubits) of the \textit{master} problem at iteration $t$. Finally, we define $\vect{x} = \left(\vect{p}, \vect{y}, \vect{a}^{[1]}, \dots, \vect{a}^{[\tau]}\right)$ as a column vector and we rewrite \autoref{eq:bridge} into a QUBO formulation:
\begin{equation}\label{eq:qubo}
    \min_\vect{x}\; \vect{x}^\intercal\matr{Q}^{[\tau]}\vect{x} + \vect{q}^{[\tau]}\vect{x} + \kappa^{[\tau]},
\end{equation}
with:
\begin{equation*}
    \begin{aligned}
    \vect{q}^{[\tau]} &= 2\sum_{k=1}^{\tau}e_k\cdot \vect{h}^{[k]} + \left(\vect{w}_p, - \vect{y}^{[t-1]},..., \vect{0}_{n_{a_\tau}}\right), \\
    \matr{Q}^{[\tau]} &= \sum_{k=1}^{\tau} \left({\vect{h}^{[k]}} \otimes \vect{h}^{[k]} \right) + \texttt{diag}\left(\vect{0}_{n_p}, \frac{1}{2}\cdot\mathds{1}_{n_y},..., \vect{0}_{n_{a_\tau}}\right), \\
    \kappa^{[\tau]} &= \sum_{k=1}^{\tau}e_k^2 + \frac{1}{2}\sum_{i = 0}^{n_y}\vect{y}^{[t-1]}_i,
    \end{aligned}
\end{equation*}
and $e_k = \vectsym{\alpha}^{[k]}\vect{b} + \vectsym{\beta}^{[k]}\vect{d}$ is a scalar. The vector $\vect{h}^{[k]}$ and the scalar $e_k$ express the cut generated from the \textit{sub} problem. Without any limitation to the size of the set of cuts $\Lambda^\prime_p$, at each iteration, the number of qubits increases by $n_{a_t}$. Therefore after few iterations \autoref{eq:qubo} reaches already a very large search space. To bound $n_{a_k}$, we relate it in a similar fashion as Corollary~\ref{cor:p} to the maximum reachable value by the cut.
\begin{algorithm}[tb]
   \caption{HQ-CRAN}
   \label{alg:hqcran}
\begin{algorithmic}
    \State\hspace{-0.5cm} \textbf{input:} $\vect{x},\epsilon,\matr{W},\vect{v}, T, \xi, \varphi, \bar{\vectsym{\alpha}},\bar{\vectsym{\beta}}, \omega_p, \omega_a$
    \State\hspace{-0.5cm} \textbf{output:} robust or unknown
    \State\hspace{-0.5cm} \textbf{initialize:} $\matr{A},\matr{B},\matr{C},\vect{d},\vect{b},\vect{g} \leftarrow$ \textsc{Build$(\vect{x}, \epsilon, \matr{W}, \vect{v})$}.
    \State set $c\leftarrow \argmaxinline_j\vect{f}(\vect{x})_j$
    \For{$j$ \textbf{in} $\mathcal{Z}\setminus \left\{c\right\}$}
        \State set $\vect{g}_{n_z - j} \leftarrow -1$, $\vect{y} \leftarrow \vect{0}_{n_y}$, $\bar{\vect{y}} \leftarrow \vect{0}_{n_y}$ and $t \leftarrow 1$
        \State $n_p $ computed according to Corollary~\ref{cor:p}
        \For{$t$ \textbf{in} $\{1, \dots, T\}$} 
            \State $s_{t} \leftarrow \argmax$ of \autoref{eq:sub_problem}
            \State $\hat{s} \leftarrow \min\left\{s_{t}, \hat{s}\right\}$
            \State $\bar{\vect{y}}^{[t]} \leftarrow \frac{1}{2} \bar{\vect{y}}^{[t]} + \frac{1}{2}\vect{y}^{[t]}$ \Comment{core point}
            \State $\vectsym{\alpha}^{[t]}, \vectsym{\beta}^{[t]} \leftarrow$ solution of \autoref{eq:additional_problem} with $\bar{\vect{y}}^{[t]}$
            \IIf{$t > \varphi$}
                $\Lambda^\prime_p \leftarrow \Lambda^\prime_p \setminus \{\vectsym{\alpha}^{[t-\varphi]}, \vectsym{\beta}^{[t-\varphi]}\}$
            \EndIIf
            \State $\Lambda^\prime_p \leftarrow \Lambda^\prime_p \cup \{\vectsym{\alpha}^{[t]}, \vectsym{\beta}^{[t]}\}$
            \State $n_{a_t}$ computed according to Corollary~\ref{corr:a_x}
            \State $\matr{Q}^{[\tau]}, \vect{q}^{[\tau]}, \kappa^{[\tau]}$ constructed according to \autoref{eq:qubo}
            \State $\vect{p}^{[t]}, \vect{y}^{[t]} \leftarrow$ solution of \autoref{eq:qubo}
            \If{$\hat{s} - \vect{w}_p\vect{p}^{[t]}\leq \xi$ \textbf{or} $\vect{w}_p\vect{p}^{[t]} > \hat{s}$}
                \If {\autoref{eq:reformulated} with $\vect{y}^{[t]}$ is feasible} 
                \State $m_{t} \leftarrow \argmininline$ of \autoref{eq:reformulated} \Comment{correct solution}
                \State \textbf{Break}
                \EndIf
            \EndIf
        \EndFor
        \IIf{$m_t < 0$} \textbf{return} {\normalfont unknown} \EndIIf
        \State reset $\vect{g}_{n_z-j} \leftarrow 0$
    \EndFor
    \State \textbf{return} {\normalfont robust}
\end{algorithmic}
\end{algorithm}

\begin{corollary}\label{corr:a_x}
Let $\bar{\eta}\in \mathbb{R}$ be the maximum of $\vect{w}_p\vect{p}$ and $e_k = \vectsym{\alpha}^{[k]}\vect{b} + \vectsym{\beta}^{[k]}\vect{d}$, then the minimum number of qubits necessary to approximate $\vect{w}^{[k]}_a\vect{a}^{[k]}$ by a factor $\omega_a$ is
\[n_{a_k} \geq \left\lceil \log_2 \left( \frac{\lvert e_k \rvert + \bar{\eta} + \norm{\vectsym{\alpha}^{[k]}\matr{B}}_1}{\omega_a} + 1\right) \right\rceil .\]
\end{corollary}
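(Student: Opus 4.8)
The plan is to mirror the argument of Corollary~\ref{cor:p}, but now tracking the maximal magnitude that the artificial slack $a_k$ must be able to represent. First I would pin down the role of $a_k$: in the penalized objective \autoref{eq:bridge} each cut contributes a squared term $\left(\vectsym{\alpha}^{[k]}(\vect{b}-\matr{B}\vect{y}) + \vectsym{\beta}^{[k]}\vect{d} - \eta + a_k\right)^2$, which vanishes exactly when $a_k$ equals the slack of the master constraint \autoref{eq:master_extreme_point}, i.e.
\[a_k = \eta - \vectsym{\alpha}^{[k]}(\vect{b}-\matr{B}\vect{y}) - \vectsym{\beta}^{[k]}\vect{d}.\]
Substituting $e_k = \vectsym{\alpha}^{[k]}\vect{b} + \vectsym{\beta}^{[k]}\vect{d}$ simplifies this to $a_k = \eta - e_k + \vectsym{\alpha}^{[k]}\matr{B}\vect{y}$, so the binary encoding $\vect{w}_a^{[k]}\vect{a}^{[k]}$ of \autoref{eq:a_k} must be able to reach at least the largest value this expression can take across admissible $\eta$ and $\vect{y}$.

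Next I would bound the three contributions separately. The term $\eta$ is capped by its own encoding, $\eta \leq \bar{\eta}$, as fixed in the proof of Corollary~\ref{cor:p}. Since $e_k$ may carry either sign, the worst case contributes $-e_k \leq \lvert e_k \rvert$. The remaining term $\vectsym{\alpha}^{[k]}\matr{B}\vect{y}$ is a linear form in the binary vector $\vect{y}\in\{0,1\}^{n_y}$; its maximum over the hypercube is the sum of the positive entries of the row vector $\vectsym{\alpha}^{[k]}\matr{B}$, which is in turn bounded by $\norm{\vectsym{\alpha}^{[k]}\matr{B}}_1$. Combining these yields the maximal value the slack can attain,
\[a_k \leq \bar{\eta} + \lvert e_k \rvert + \norm{\vectsym{\alpha}^{[k]}\matr{B}}_1.\]

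Finally I would enforce that the fixed-point range of $\vect{w}_a^{[k]}\vect{a}^{[k]}$, whose maximum is $\omega_a\,(2^{n_{a_k}}-1)$ since it encodes only non-negative numbers, dominates this bound: $\omega_a(2^{n_{a_k}}-1) \geq \lvert e_k \rvert + \bar{\eta} + \norm{\vectsym{\alpha}^{[k]}\matr{B}}_1$. Solving for $n_{a_k}$ and taking the ceiling then reproduces the stated inequality. I expect the only delicate step to be the third bound: one must argue that maximizing an indefinite linear form over $\{0,1\}^{n_y}$ amounts to selecting exactly the positive coordinates, that this selection is controlled by the $\ell_1$ norm, and that the signs of $e_k$ and the value of $\eta$ are chosen jointly to maximize $a_k$, so that the resulting qubit count is genuinely conservative for every cut added to $\Lambda^\prime_p$.
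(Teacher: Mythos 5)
Your proposal is correct and follows essentially the same route as the paper: both arguments reduce to bounding the required slack by the three terms $\lvert e_k\rvert$, $\bar{\eta}$, and $\norm{\vectsym{\alpha}^{[k]}\matr{B}}_1$, and then imposing that the fixed-point range $\omega_a\left(2^{n_{a_k}}-1\right)$ dominates this bound. The only (cosmetic) difference is that the paper packages the bound via a triangle inequality on the penalized residual $e_k - \vect{w}_p\vect{p} - \vectsym{\alpha}^{[k]}\matr{B}\vect{y}$, whereas you solve directly for the value of $a_k$ that zeroes the penalty and maximize it termwise --- which is, if anything, slightly more careful on the mixed-sign entries of $\vectsym{\alpha}^{[k]}\matr{B}$.
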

\begin{proof}
Let us consider the binary approximation of the cut $\vectsym{\alpha}^{[k]} \left(\vect{b}-\matr{B}\vect{y}\right)+\vectsym{\beta}^{[k]}\vect{d} - \eta + a_k$ previously added to the \textit{master} objective in \autoref{eq:bridge}, i.e. $\vectsym{\alpha}^{[k]} \left(\vect{b}-\matr{B}\vect{y}\right)+\vectsym{\beta}^{[k]}\vect{d} -\vect{w}_p\vect{p} + \vect{w}^{[k]}_a\vect{a}^{[k]}$. 
Starting from the triangle inequality:
\begin{equation*}
\lvert e_k \rvert + \norm{\vect{w}_p\vect{p}}_2 + \norm{\vectsym{\alpha}^{[k]}\matr{B}\vect{y}}_2 \geq \norm{e_k -\vect{w}_p\vect{p} - \vectsym{\alpha}^{[k]}\matr{B}\vect{y}}_2,
\end{equation*}
where $e_k = \vectsym{\alpha}^{[k]}\vect{b} + \vectsym{\beta}^{[k]}\vect{d}$ and $\norm{e_k}_2$ is equal to its absolute value $\lvert e_k \rvert$. As the maximum of $\vect{y}$ is $\mathds{1}_{n_y}$ and the maximum of $\vect{w}_p\vect{p}$ is $\bar{\eta}$, we end at:
\begin{equation*}
\lvert e_k \rvert + \bar{\eta} + \norm{\vectsym{\alpha}^{[k]}\matr{B}}_1 \geq  \lvert e_k \rvert + \norm{\vect{w}_p\vect{p}}_2 + \norm{\vectsym{\alpha}^{[k]}\matr{B}\vect{y}}_2,
\end{equation*}
where $\vectsym{\alpha}^{[k]}\matr{B}\mathds{1}_{n_y}$ is a scalar and so, we can rewrite $\norm{\vectsym{\alpha}^{[k]}\matr{B}\mathds{1}_{n_y}}_2$ as $\norm{\vectsym{\alpha}^{[k]}\matr{B}}_1$ with $\norm{\cdot}_1$ as $l_1$-norm. Let us denote $\bar{a}_k \coloneqq w_{a_k} \cdot \left( 2^{n_{a_k}} - 1\right)$ as the maximum of $\vect{w}^{[k]}_a\vect{a}^{[k]}$, then by imposing
$\vect{w}^{[k]}_a\vect{a}^{[k]} \geq \lvert e_k \rvert + \bar{\eta} + \norm{\vectsym{\alpha}^{[k]}\matr{B}}_1$ we obtain
\begin{equation*}
\bar{a}_k \geq \lvert e_k \rvert + \bar{p} + \norm{\vectsym{\alpha}^{[k]}\matr{B}}_1.
\end{equation*}

\end{proof}

To convert \autoref{eq:qubo} into an Ising model, given that each binary variable satisfies the relation $\vect{x}_j = \vect{x}_j^2$, the linear term $\vect{q}^{[\tau]}$ can be integrated into the quadratic one by adding its entries into the main diagonal.
Finally, we can notice that $\matr{Q}^{[\tau]}$ is symmetric by construction, therefore we can directly replace $\vect{x}$ with $\vect{s} = 2\vect{x}-1$ to obtain the corresponding Ising model. Given these adjustments, \autoref{eq:qubo} can be either solved with a quantum annealing algorithm or QAOA.

\textbf{Final procedure.}
\autoref{alg:hqcran} summarizes the steps described above. Note that $T$ is the upper bound on the number of iterations and $\varphi$ is the maximum number of cuts. In the worst case, if the gap $\xi$ is not met, the algorithm will run for a maximum number of $T$ iterations. 
\section{Results}\label{sec:results}

\textbf{Architecture, Dataset and Training Methods.} We conduct experiments with one type of multilayer perceptron (MLP) neural network: MLP-$2\text{x}[20]$. Here MLP-$m\text{x}[n]$ refers to $m$ hidden layers and $n$ units per hidden layer. The network uses ReLU functions after every fully connected layer. We train our model on the MNIST dataset for $20$ epochs with a batch size of $128$ in two ways: (i) with a regular loss function and (ii) adversarially trained via the Projected Gradient Descent (PGD) as in \citet{madry2017towards}. In case of a regularly trained model, we keep the name MLP-$2\text{x}[20]$, while we refer to it as PGD-$2\text{x}[20]$ if it is adversarially trained. The clean accuracy for the different networks are $95,62\%$ and $86,73\%$ for MLP-$2\text{x}[20]$ and PGD-$2\text{x}[20]$, respectively. The adversarial training used adversarial examples from the infinity norm ball around the input with radius $\epsilon = 0.01$. 

\textbf{Experimental Setup.} We divide the hardware into classical and quantum computing. On the classical side, we ran the algorithm on a server having 4xCPUs Intel(R) Xeon(R) E7-8867 v4 running at 2.40GHz for a total of 72/144 cores/threads. On the quantum side, we used two different system types: quantum annealing and gate-based quantum computers. In the first case, we access the D-Wave Advantage\textsuperscript{TM} system 5.1 constructed with 5760 qubits in two ways: (i) directly on the Quantum Processing Unit (QPU) and (ii) with the Hybrid solver provided by D-Wave Leap\footnote{D-Wave Leap\textsuperscript{TM} Quantum Cloud Service https://cloud.dwavesys.com/leap/}, which makes use of a Tabu search to further decompose the problem and run a part on the QPU. In the second case, we used the QAOA\footnote{Python library Qiskit v0.36.0 https://github.com/Qiskit/qiskit} runtime program of the IBMQ\footnote{IBM Quantum https://quantum-computing.ibm.com/} cloud on the IBM Brooklyn QC having the Hummingbird r2 architecture of 65 qubits, a quantum volume of 32, and 1.5K circuit layers operations per second.

\subsection{Evaluation of Robustness}

We compare our method against a complete (\textbf{Exact})~\cite{tjeng2018evaluating} and incomplete (\textbf{Convex})~\cite{wong2018provable} verifier with the MIP formulation as in \autoref{eq:general_verification} and the convex relaxation from \autoref{eq:convex_relaxation}.
We evaluate the empirical performance of HQ-CRAN in an ideal setting, i.e. where the \textit{master} and \textit{sub} problems are solved using the IBM ILOG CPLEX\footnote{IBM (2020) IBM ILOG CPLEX 20.1 User’s Manual (IBM ILOG CPLEX Division, Incline Village, NV)} software on a classical computer.
For a fair comparison, we propagate the boundaries through the network with interval arithmetic for all methods, and evaluate all methods on the first 100 samples from the MNIST test set with $\epsilon\in\{\frac{1}{255},\frac{2}{255},\frac{4}{255},\frac{8}{255}\}$ as adversarial budget. We use HQ-CRAN with the improvements of \autoref{sec:improvements} and real variables for $\eta$ and $a_k$. 
\begin{table}[ht!]
\captionsetup{font=small}
\caption{
Adversarial robustness of MNIST classifiers to perturbations of $\epsilon$ in a $l_\infty$-norm. 
We run each algorithm on the first 100 test set samples from the MNIST dataset. 
The times are expressed in seconds.
}
\label{table:robustness}
\vskip -0.05in
\begin{footnotesize}
\begin{sc}
\resizebox{\columnwidth}{!}{%

\begin{tabular}{lccccccc}
\toprule
Network &$\epsilon$ &\multicolumn{3}{c}{Certified Accuracy $\uparrow$} &\multicolumn{3}{c}{Average Time [s]$\downarrow$} \\
                    &                    & Exact &HQ-CRAN &Convex &Exact &HQ-CRAN &Convex \\
\midrule
PGD-$2\text{x}[20]$& $\nicefrac{1}{255}$  &$88\%$ &$88\%$ &$88\%$ &$0.04$ &$0.31$ &$0.004$ \\
                    & $\nicefrac{2}{255}$ &$88\%$ &$88\%$ &$88\%$ &$0.06$ &$0.36$ &$0.004$ \\
                    & $\nicefrac{4}{255}$ &$88\%$ &$85\%$ &$85\%$ &$0.10$ &$0.61$ &$0.005$ \\
                    & $\nicefrac{8}{255}$ &$81\%$ &$81\%$ &$61\%$ &$0.13$ &$1.10$ &$0.006$ \\
\midrule
MLP-$2\text{x}[20]$& $\nicefrac{1}{255}$   &$96\%$ &$96\%$   &$96\%$ &$0.05$ &$0.42$ &$0.005$ \\
                    & $\nicefrac{2}{255}$  &$96\%$ &$95\%$  &$95\%$ &$0.07$ &$0.62$ &$0.005$ \\
                    & $\nicefrac{4}{255}$  &$95\%$ &$95\%$  &$95\%$ &$0.10$ &$1.04$ &$0.005$  \\
                    & $\nicefrac{8}{255}$  &$80\%$ &$73\%$  &$24\%$ &$0.14$ &$9.39$ &$0.006$ \\
\bottomrule
\end{tabular}}
\end{sc}
\end{footnotesize}
\end{table}

In terms of final robustness certification, we report in Table~\ref{table:robustness} the certified accuracy, i.e., the fraction of verified and correctly classified samples of all test samples. 
To limit the compute time, we set the maximum number of iterations $T$ to 500 and the target gap $\xi$ between the \textit{master} and \textit{sub} problem to $1$. 
Within these settings, HQ-CRAN behaves similarly to \textbf{Convex} up to an $\epsilon$ of $\nicefrac{4}{255}$, and shows better results for $\nicefrac{8}{255}$. 
As previously presented in~\autoref{sec:method}, our method places itself in the middle of \textit{complete} and \textit{incomplete} verification methods. 
The outer approximation of \textbf{Convex} demonstrates poor performances for $\epsilon = \nicefrac{8}{255}$, indicating that outer approximations struggle to verify larger perturbations.
\tom{use beta-crown. if we dont manage to do it here. then we should at least say that there are more advanced approximation techniques}
On the contrary, our average runtime is $\sim2$ orders of magnitude slower than \textbf{Convex}. As this work is only a proof-of-concept and it is expected that the algorithm is slower than both the \textbf{Exact} and \textbf{Convex}, we do not consider current state-of-the-art verifiers based on GPUs, such as $\beta$-CROWN \cite{wang2021betacrown} or GPUPoly \cite{mueller2021scaling}
\tom{+ we should probably include the timings + I dont know the term augmented perturbations}

\subsection{Simulations in Classical Computing}
\begin{figure}[htb!]
\vskip -0.2in
\begin{center}
\includegraphics[width=\columnwidth]{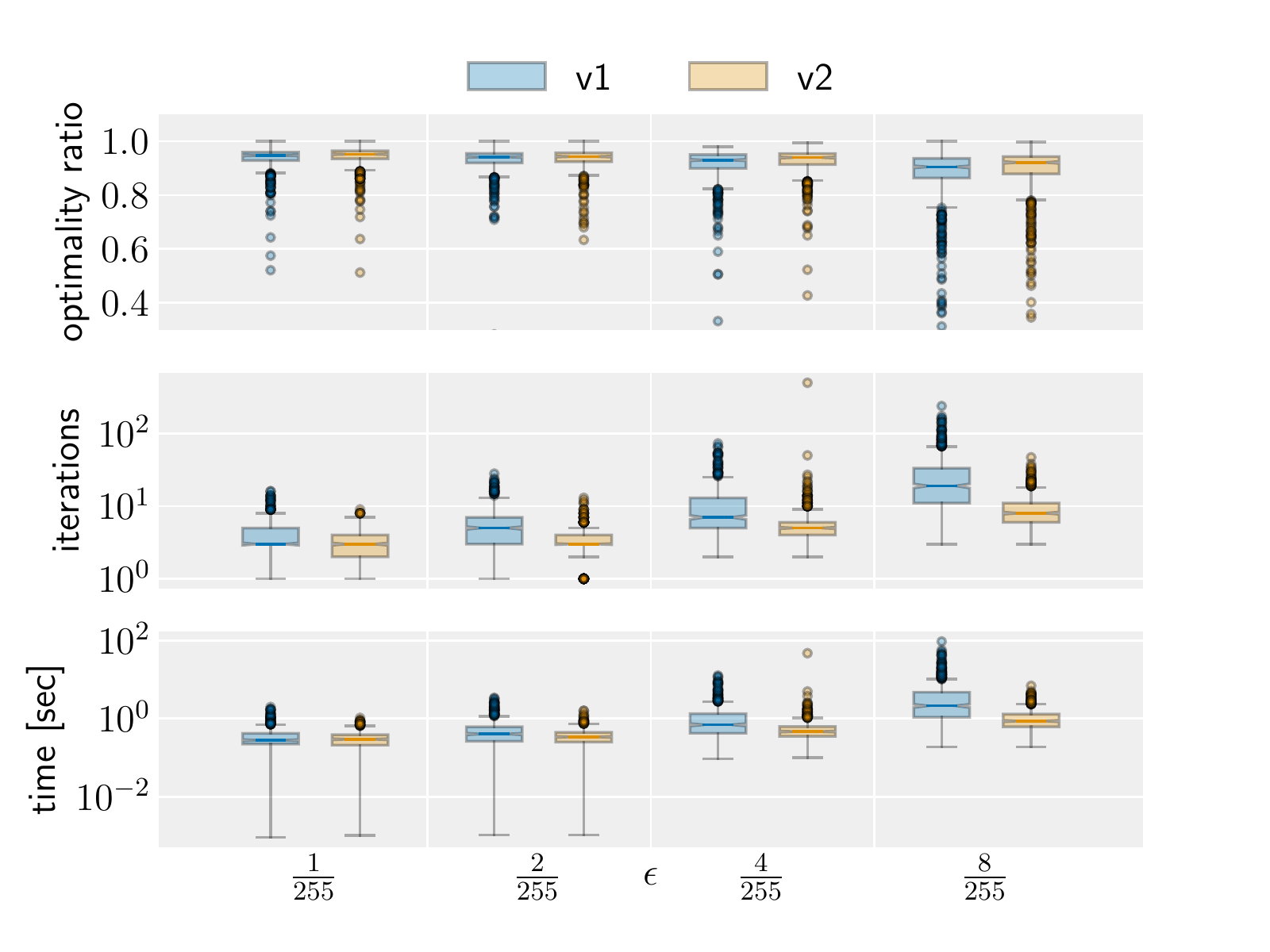}
\captionsetup{font=small}
\caption{
    Optimality ratio between the certified lower bounds of HQ-CRAN, without~(\textbf{v1}) and with~(\textbf{v2}) improvements, and the exact values of \textbf{Exact}. 
    In the context of HQ-CRAN, the bounds for the sub problem and the maximum number of iterations $T$ have been set to $500$. 
    The gap $\xi$ has been set to $1$. 
    We report the distribution as interquartile range of the first 100 samples of the MNIST test set on PGD-$2\text{x}[20]$.
    }
\label{fig:real_pgd}
\end{center}
\vskip -0.3in
\end{figure}
We further conduct evaluations on our improvements presented in \autoref{sec:improvements}. 
For this reason, we structure it as an ablation study and to distinguish the approaches, we define two versions of HQ-CRAN. 
The first version (\textbf{v1}) can be seen as a baseline as it is similar to the earlier proposed algorithm of \cite{chang2020quantum}, which lacks the \textit{additional} problem of \autoref{eq:additional_problem} and the Hamming distance. Instead the second version (\textbf{v2}) uses both. 
All versions compute the minimum number of qubits with Corollaries~\ref{cor:p} and~\ref{corr:a_x}.
We start the experiments with a comparison of the final objective values for each inner minimization of \autoref{eq:general_verification}. 
For a fair evaluation, we compare our results in terms of \textit{optimality ratio} between the final master objective $m_t$ and the solution $m^*$ of \textbf{Exact}, defined as:
\begin{equation*}
    R = \left\{ 
    \begin{aligned}
        &\nicefrac{m^*}{m_t}                     &&\text{if} \; m_t > 0, \\
        &\nicefrac{m_t}{m^*}                     &&\text{else if} \; m^* < 0, \\
        &\nicefrac{m^*}{\left(m^* - m_t\right)}  &&\text{elsewhere.}
    \end{aligned}
    \right.
\end{equation*}
In \autoref{fig:real_pgd}, we show the comparison in terms of optimality ratio, average number of iterations and average time per class. In the context of \textbf{v2}, the results demonstrate a clear reduction on the average number of iterations and time while maintaining a high optimality ratio. These advantages are even more significant for larger $\epsilon$ values. Overall, in both methods the average optimality ratio remains very close to 1. The maximum size of the cuts set $\varphi$ has been set to infinity, i.e. all cuts are considered.

\tom{how come we have consistently worse correct solutions for quantum v2?}
\begin{table*}[htbp]
\captionsetup{font=small}
\caption{Comparison between simulated and quantum annealing on the \textit{master} of HQCRAN, without \textbf{v1} and with \textbf{v2} improvements. The maximum number of iterations $T$ has been set to $15$ and the gap $\xi$ to $1$ and the \textit{sub} problem boundaries $\overline{\vectsym{\alpha}}$ and $\overline{\vectsym{\beta}}$ to $5$. The penalty weights $w_a$ and $w_p$ were set to 0.1 and 0.01, respectively. The maximum size of the cuts set $\varphi$ has been set to infinity and $5$ for \textbf{v1} and \textbf{v2}, respectively. We run each algorithm on the first 100 samples of the MNIST test set.}
\begin{minipage}{\linewidth}
\begin{center}
\begin{footnotesize}
\begin{sc}
\begin{tabular}{cccccccccccccc}
\toprule
\multirow{3}{*}{Networks} & \multirow{3}{*}{$\epsilon$} &\multicolumn{4}{c}{Correct Solutions $\uparrow$} &\multicolumn{4}{c}{Average \# Iterations $\downarrow$} &\multicolumn{4}{c}{Average \# of Qubits $\downarrow$} \\
&  &\multicolumn{2}{c}{Simulated} &\multicolumn{2}{c}{Quantum} &\multicolumn{2}{c}{Simulated} &\multicolumn{2}{c}{Quantum} &\multicolumn{2}{c}{Simulated} &\multicolumn{2}{c}{Quantum} \\
& &\multicolumn{1}{c}{v1} &\multicolumn{1}{c}{v2} &\multicolumn{1}{c}{v1} &\multicolumn{1}{c}{v2} &\multicolumn{1}{c}{v1} &\multicolumn{1}{c}{v2} &\multicolumn{1}{c}{v1} &\multicolumn{1}{c}{v2} &\multicolumn{1}{c}{v1} &\multicolumn{1}{c}{v2} &\multicolumn{1}{c}{v1} &\multicolumn{1}{c}{v2} \\
\midrule
\multirow{4}{*}{PGD-$2$x$[20]$}
&$\nicefrac{1}{255}$ &$94\%$    &$96\%$  &$74\%$  &$73\%$ &$4\pm2$ &$3\pm2$     &$4\pm3$ &$3\pm2$   &$59\pm35$  &$45\pm21$  &$52\pm45$ &$37\pm17$ \\
&$\nicefrac{2}{255}$ &$82\%$    &$91\%$  &$54\%$  &$50\%$ &$6\pm3$ &$4\pm3$     &$4\pm3$ &$3\pm2$   &$82\pm43$  &$58\pm23$  &$65\pm48$ &$46\pm20$ \\
&$\nicefrac{4}{255}$ &$52\%$    &$67\%$  &$36\%$  &$34\%$ &$8\pm3$ &$7\pm4$    &$5\pm4$  &$5\pm3$   &$121\pm50$ &$77\pm20$  &$80\pm54$ &$59\pm21$ \\
&$\nicefrac{8}{255}$ &$10\%$    &$17\%$    &$13\%$   &$11\%$  &$12\pm3$ &$11\pm4$   &$9\pm5$  &$7\pm4$   &$186\pm44$ &$94\pm10$   &$139\pm70$   &$81\pm19$ \\
\midrule
\multirow{4}{*}{MLP-$2$x$[20]$} 
&$\nicefrac{1}{255}$ &$83\%$    &$85\%$ &$65\%$ &$61\%$     &$6\pm3$    &$5\pm3$    &$5\pm4$ &$4\pm2$ &$88\pm47$  &$64\pm22$ &$77\pm53$ &$52\pm20$ \\
&$\nicefrac{2}{255}$ &$61\%$    &$65\%$ &$38\%$ &$37\%$     &$8\pm3$    &$7\pm4$    &$5\pm4$ &$5\pm4$ &$112\pm51$ &$75\pm20$ &$80\pm54$ &$60\pm21$  \\
&$\nicefrac{4}{255}$ &$27\%$    &$30\%$ &$17\%$ &$17\%$     &$10\pm4$   &$10\pm4$    &$6\pm4$ &$7\pm5$ &$155\pm56$ &$88\pm15$ &$93\pm61$ &$72\pm24$ \\
&$\nicefrac{8}{255}$ &$3\%$     &$4\%$  &$4\%$ &$6\%$     &$14\pm2$   &$14\pm2$   &$12\pm5$ &$10\pm5$ &$214\pm30$ &$100\pm5$ &$179\pm66$ &$93\pm16$ \\
\bottomrule
\end{tabular}
\end{sc}
\end{footnotesize}
\end{center}
\end{minipage}
\label{tab:annealing}
\end{table*}

\subsection{Quantum Hardware Experiments}

We continue our evaluation with a comparison between simulated\footnote{Python dwave-neal v0.5.7, general Ising graph simulated annealing solver.} and quantum annealing. 
Embedding the problem into a QA requires solving the minor embedding problem. 
In our case, we use the clique embedding strategy \cite{boothby2016fast} which handle the connectivity problem in a way that each connection larger than the maximum available size (in case of Advantage it is $\sim$15) is handled with long chains between qubits. 
To decrease the chains, before submitting the model to the sampler, we prune connection below a certain threshold (5$\%$) and retaining only interactions with values commensurate with the sampler’s precision. In both simulated and quantum annealing, the number of reads was set to 100 and in the context of simulated annealing, the number of sweeps was set to 50000.

In \autoref{tab:annealing}, we report the numerical results in terms of percentage of correct solutions, i.e. solutions where $m_t \leq m^*$. 
The average number of qubits reflects the size of the \textit{master} problem before being embedded into the quantum hardware. 
We notice that the number of correct solutions decrease for increased values of $\epsilon$. The behaviour has to be attribute to the maximum number of iterations $T$ fixed to $15$. 
This choice limits the convergence of the algorithm but gives us a fair comparison between the approaches. 
As previously mentioned, the limiting factor is the clique embedding which fixes the maximum size of logical qubits to $177$ on the Advantage~\cite{mcgeoch2021advantage}. 
The \textbf{v2} shows an overall reduced number of qubits with respect to \textbf{v1} while maintaining approximately the same correctness. 
\tom{I would extract the settings here. It is a bit hard to read still. Go maybe like this: in the quantum experiment we use chain ... logical qubits ... max iterations ...  and then: In table 2 we can see the results and we see that v2 performs... I think this will make it better and is basically how you did it before}\\

Finally, we evaluate HQ-CRAN running on a gate-based quantum computer via QAOA and on a quantum annealer via Leap.
Due to time limits on the quantum hardware, we present results on the first and the first ten MNIST test set samples for QAOA and Leap, respectively.
This also restricts our hyperpameter choices for QAOA to the Perturbation Stochastic Approximation optimizer with a maximum iteration number of 10 instead of 100, 2 repetitions and 1024 shots. 
For Leap we use the default 3-minute execution time limit.
The penalty weights $w_a$ and $w_p$ were set to 0.1 and 0.01, respectively. $\epsilon=\frac{1}{255}$ and a target gap $\xi$ of 1 for both algorithms.
With this setting, QAOA starts of with $\sim28$ qubits and requires 13 additional qubits per step.
For Leap, the number of theoretical qubits is identical but one needs to add qubits required for increased connectivity as discussed previously.

\begin{figure}[t]
\captionsetup{font=small}
\begin{center}
\centerline{\includegraphics[width=\columnwidth]{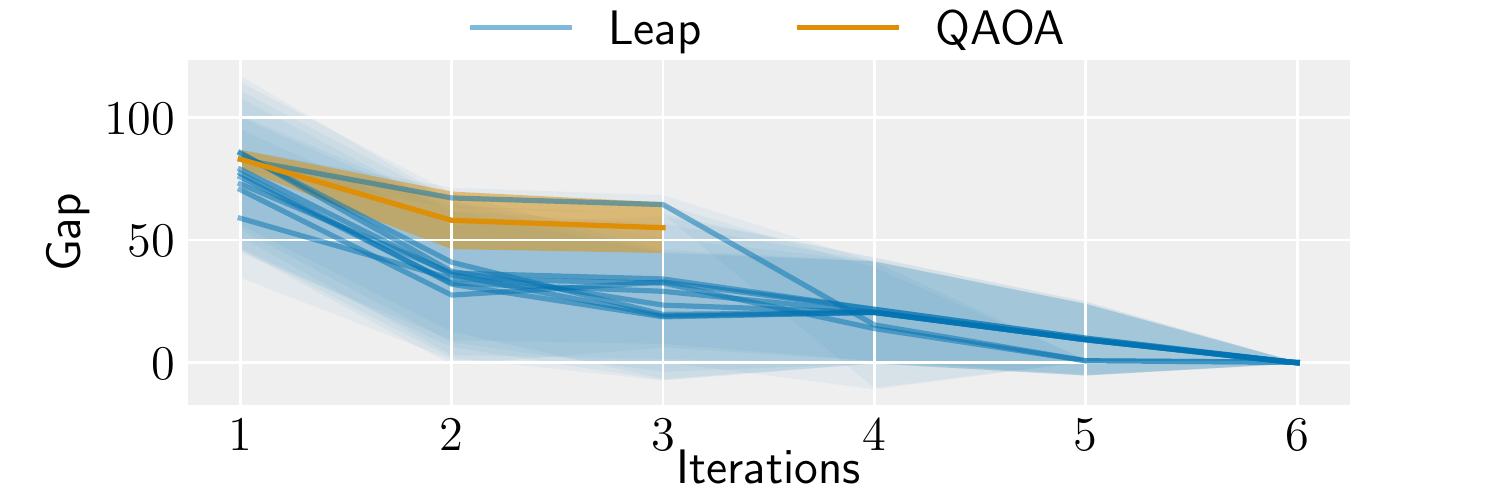}}
\caption{Average and standard deviation of the difference between \textit{master} and \textit{sub} objective at each iteration for QAOA and Leap. In the case of Leap, we plot the first 10 samples of the MNIST test set. In contrast, for QAOA, we only show the first one. We used an $\epsilon$ value of $\nicefrac{1}{255}$ and PGD-$2\text{x}[20]$.}
\label{fig:hardware_steps}
\end{center}
\vskip -0.3in
\end{figure}

\autoref{fig:hardware_steps} shows the convergence behavior of Leap and QAOA.
The gap is defined as difference between the objectives of \textit{master} and \textit{sub} problems at each iteration. The solution is reached when the gap is lower than our target gap $\xi=1$.
While we find QAOA not converging by the constrained accessibility and limitations of current QC hardware, Leap converges close to the optimal solution within 6 steps for all tested samples.
\tom{i did not see where the GAP is defined. maybe i missed it from another experiment. but I would make it clear here to the reader again that this is the gap between master and sub and if the gap is zero and the point is valid we achieved the exact solution. and maybe also specify the remaining gap?}

\section{Discussion}

As discussed in \autoref{sec:results}, the verification abilities of our new HQ-CRAN method are promising with exact solvers such as CPLEX. In addition, the proposed improvements achieved better results in terms of number of iterations and qubits than other hybrid Benders decomposition methods. Apart from these advantages, the results on quantum devices are still far from optimal in the case of larger neural networks or higher perturbations. There are multiple reasons for this which we want to discuss here.
Firstly, due to decoherence of qubits or thermal fluctuations, the discrepancy between quantum simulation and NISQ devices is a known issue impeding results and the ability to detect speed-ups~\cite{ronnowDefiningDetectingQuantum2014,katzgraberGlassyChimerasCould2014}.
Secondly, the approximation of the real-valued variables is sub-optimal as it is sensitive towards random bit flips as some bits have exponentially larger impact than others.
Further, the approximation of real-valued numbers may change the optimal value of our optimization.
Thirdly, we set $\overline{\vectsym{\alpha}}$ and $\overline{\vectsym{\beta}}$ to arbitrary large enough values, even though this may lead to many unnecessary cuts increasing time, qubit requirements and introducing numerical instabilities.

Despite these limitations, future improvements to quantum hardware and algorithms may alleviate these issues.
The first issue may be eliminated by time as quantum computers continue to grow exponentially in the number and qubits~\cite{henrietQuantumComputingNeutral2020}.
Moreover, improvements to qubit connectivity and error-correction may close the gap between quantum simulations and hardware, guaranteeing the soundness of HQ-CRAN.
Additionally, instead of binary approximations for real-valued variables, angular or phase encodings may be more efficient and less error prone~\cite{wiebeFloatingPointRepresentations2013,haenerQuantumCircuitsFloatingpoint2018}.
Future research may also lead to bounds for $\overline{\vectsym{\alpha}}$ and $\overline{\vectsym{\beta}}$ to reduce the number of iterations without interfering with the optimal solution.
Lastly, introducing new stopping criteria may further reduce the number of iterations, e.g., due to the nature of robustness verification, one could stop after the master objective gets larger than zero or the sub objective falls below zero as this is sufficient to issue or deny a certificate.

In the end, one may obtain an polynomial speed-up~\cite{groverFastQuantumMechanical1996}, though, still in exponential time as it is assumed that QC may not solve NP-hard problems with high probability in P time~\cite{mohrQuantumComputingComplexity2014}. 
An analogy of this can be observed in Benders decomposition, as in the worst case, one has to add all possible cuts from the exponentially large set of extreme rays.
Luckily, in practice, few iterations already lead to sufficient results. 

\section{Conclusion}
We proposed a proof of concept for a hybrid quantum algorithm for robustness verification leveraging the computational power of NISQ devices.
We have shown that by applying Benders decomposition to the MIP related to ReLU network certification one obtains a classically solved LP and a QC suitable sub problem.
Further, we proved the soundness of our certificate and provide estimates on the number of qubits to achieve reasonable certificate quality.
In our experiments, we observed that directly implementing Benders algorithm often leads to slow convergence, requiring many iterations and more qubits than current quantum hardware offers. 
Nevertheless, we improved related works by reducing the number of iterations and placing a limit to the overall number of qubits required.
In conclusion, this work paves the way towards exploiting QC combinatorial power to accelerate robustness verification of ReLU networks in the future.


\printbibliography

\end{document}